\definecolor{DarkBlue}{rgb}{0,0,0.8} 
\definecolor{DarkGreen}{rgb}{0,0.5,0.0} 
\definecolor{DarkRed}{rgb}{0.9,0.0,0.0} 
\numberwithin{equation}{section}
\newtheorem{thm}[equation]{Theorem}
\newtheorem{lem}[equation]{Lemma}
\newtheorem{cor}[equation]{Corollary}
\newtheorem{prop}[equation]{Proposition}
\theoremstyle{definition}
\newtheorem{definition}[equation]{Definition}
\newtheorem{ex}[equation]{Example}
\newtheorem{construction}[equation]{Construction}
\newcommand{\ZZ}{\mathbf{Z}}
\newcommand{\on}{\operatorname}
\newcommand{\wt}{\on{wt}}
\renewcommand{\vec}[1]{\mathbf{#1}}
\title{Structural Theory of $2$-d Adinkras}
\author{Kevin Iga and Yan X Zhang}
\begin{document}

\pagestyle{plain}

\maketitle

\begin{abstract}
Adinkras are combinatorial objects developed to study ($1$-dimensional) supersymmetry representations. Recently, \emph{$2$-d Adinkras} have been developed to study $2$-dimensional supersymmetry.  In this paper, we classify all $2$-d Adinkras, confirming a conjecture of T. H\"ubsch.  Along the way, we obtain other structural results, including a simple characterization of H\"ubsch's \emph{even-split doubly even codes}. 
\end{abstract}

\section{Introduction}
Adinkras (in this paper, called \emph{$1$-d Adinkras}) were introduced in \cite{d2l:first} to study representations of the super-Poincar\'e algebra in one dimension.  There have been a number of developments that have led to the classification of $1$-d Adinkras.\cite{d2l:graph-theoretic,d2l:decodes,d2l:omni,d2l:topology, dil:cohomology,zhang:adinkras}  Based on the success of this program, there have been a few recent approaches to using Adinkra-like ideas to study the super-Poincar\'e algebra in two dimensions.  This has led to the development of $2$-d Adinkras.\cite{gates:dimensional_extension,hubsch:weaving}

In this paper, we characterize $2$-d Adinkras, guided by the approach and conjectures set forth in \cite{hubsch:weaving}. We begin in Section~\ref{sec:prelim} by recalling the definition of ($1$-d) Adinkras and some of their features, reviewing the \emph{code} associated with an Adinkra\cite{d2l:omni} and the concept of \emph{vertex switching}.\cite{dil:cohomology,zhang:adinkras} As this paper is a mostly self-contained work of combinatorial classification, we do not discuss (or require from the reader) the physics and representation theory background relating to $1$-d Adinkras; the interested reader may see Appendix~\ref{app:repn} and the aforementioned references for more information along these lines. Instead, Section~\ref{sec:prelim}'s goal is to provide the minimum background to understand and manipulate Adinkras as purely combinatorial objects. 

Then, Sections~\ref{sec:2d}--\ref{sec:code2d} discuss $2$-d Adinkras: the definition, some basic constructions, and characterizing their codes. In Section~\ref{sec:quotient}, we prove the main theorem, which is H\"ubsch's conjecture:\footnote{The formulation in \cite{hubsch:weaving} is slightly different: see Appendix~\ref{app:repn} for details.}

\begin{thm}
\label{thm:main}
Let $A$ be a connected $2$-d Adinkra.  Then there exist $1$-d Adinkras $A_1$ and $A_2$ so that
\[A\cong F(A_1\times A_2)/\sim, \]
where $F$ is a vertex switching and  $\sim$ is described by an action of a subgroup of $\ZZ_2^n$.
\end{thm}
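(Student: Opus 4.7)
The plan is to view $A$ as a quotient of a natural product cover. In a $2$-d Adinkra, the edges split into two color classes $E_+$ and $E_-$ corresponding to the two dimensions of supersymmetry. Pick a base vertex $v_0 \in A$ and let $A_1$ and $A_2$ be the connected components containing $v_0$ in the subgraphs $(V,E_+)$ and $(V,E_-)$ respectively. Each $A_i$ should inherit the structure of a $1$-d Adinkra from $A$ by restricting to the appropriate colors, dashings, and bipartition, providing the candidate factors in the decomposition.

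I would then construct a combinatorial map $\phi : A_1 \times A_2 \to A$ by setting $\phi(v_1, v_2)$ to be the endpoint of the path in $A$ obtained by concatenating an $E_+$-path from $v_0$ to $v_1$ with an $E_-$-path to $v_2$. The key point is that the $2$-d Adinkra axioms should force $E_+$- and $E_-$-edges to fit into commuting squares, making $\phi$ well-defined on vertices independently of path choices, and surjective by the connectedness of $A$. The dashings on $\phi(A_1 \times A_2)$ will generally not match those on $A$; one applies a vertex switching $F$ to the product to align them, yielding a morphism of $2$-d Adinkras $F(A_1 \times A_2) \to A$. The relation $\sim$ is then defined as the fibers of this morphism.

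It remains to identify the deck group of this covering as a subgroup of $\ZZ_2^n$. Here I would leverage the code-theoretic classification of $1$-d Adinkras, reviewed in Section~\ref{sec:prelim} and refined in Section~\ref{sec:code2d}, which asserts that the deck group of a $1$-d Adinkra over its universal $N$-cube cover is a binary code, i.e.\ a subgroup of $\ZZ_2^N$. Applied to each factor and then combined via the product, one obtains a subgroup of $\ZZ_2^{n_1+n_2} = \ZZ_2^n$ acting freely on $F(A_1 \times A_2)$ with quotient $A$.

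The main obstacle will be showing that the vertex switching $F$ can be chosen globally. This amounts to verifying that the sign discrepancy between the product dashing and the pulled-back dashing on $A$ is a coboundary; its natural proof proceeds by reducing to the $1$-d cohomology of dashings on cubes along each factor and then extending consistently across the $2$-d faces using the commutation of $E_+$- and $E_-$-translations. Once this cohomological step is in place, checking that the fibers of $\phi$ correspond exactly to the asserted $\ZZ_2^n$-subgroup action should follow from a careful bookkeeping of the codes associated to $A$, $A_1$, and $A_2$ as developed in Section~\ref{sec:code2d}.
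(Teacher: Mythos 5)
Your overall strategy matches the paper's: take the connected components $A_L^0,A_R^0$ of the left- and right-moving subgraphs through a base vertex, map the product into $A$ by concatenating a left-moving path with a right-moving path (this is exactly the paper's $\Phi$, cf.\ Lemma~\ref{lem:phiformula}), fix the dashing by a vertex switching justified cohomologically (the paper's Lemma~\ref{lem:switch12} plus Lemma~\ref{lem:cycles-switching-class}), and identify the fibers via codes. The one point I would not sign off on is your identification of the group defining $\sim$. You propose to take the codes of the two factors and ``combine them via the product'' to get the subgroup of $\ZZ_2^n$ acting on $F(A_1\times A_2)$ with quotient $A$. But $Z_L(C(A_1))\oplus Z_R(C(A_2))$ is precisely the code of the product $A_1\times A_2$ itself (Proposition~\ref{prop:prodcode}), i.e.\ the stabilizer of every vertex of the product; it acts \emph{trivially} on $A_1\times A_2$ and cannot be the deck group of a nontrivial covering $A_1\times A_2\to A$. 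The group you actually need is a complement $K$ of $C\,'=Z_L(C(A_1))\oplus Z_R(C(A_2))$ inside the full code $C(A)$ (Lemma~\ref{lem:existk}); it is exactly because $K\cap C\,'=0$ that $K$ acts freely (indeed moves vertices distance at least $4$, which is what keeps the quotient from acquiring double edges or merged colors), and the size of $K$ accounts for the discrepancy $|A_1\times A_2|/|A|$. As written, your bookkeeping would produce the wrong group and a trivial quotient.

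Two smaller items you wave at but would need to nail down. First, well-definedness of $\phi$ independent of path choices is not just the commuting-square axiom locally; the clean way to get it is to realize both $A$ and $A_1\times A_2$ as quotients of the colored Hamming cube $I^n$ by $C$ and $C\,'$ respectively (Theorem~\ref{thm:1d-quotients}) and let $\Phi$ be induced by the inclusion $C\,'\subset C$, which is how the paper proceeds in Theorem~\ref{thm:isocolors}. Second, for the quotient $F(A')/K$ to carry a dashing you need the switched dashing to be $K$-invariant; this is automatic once you take it to be the pullback of $\mu$ along $\Phi$, but it is a step you should state rather than leave implicit.
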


Finally, Section~\ref{sec:structure}, guided by the main theorem, summarizes the basic structure of $2$-d Adinkras, including a (computable but impractical due to combinatorial explosion) scheme to generate all $2$-d Adinkras. We end with some remarks in Section~\ref{sec:conclusion}.

\section{Preliminaries}
\label{sec:prelim}

\subsection{$1$-d Adinkras}
\label{sec:1d}
\emph{Adinkras} in \cite{d2l:first,d2l:graph-theoretic,zhang:adinkras} will be referred to as \emph{$1$-d Adinkras} in this paper, since they relate to supersymmetry in $1$ dimension. In this section, we  review a definition of $1$-d Adinkras and give some tools from previous work on their structural theory. The material in this section is mainly found in \cite{d2l:omni,zhang:adinkras}, with minor paraphrasing. 

\begin{definition}[$1$-d Adinkras]
Let $n$ be a non-negative integer.  An \emph{$1$-d Adinkra} with $n$ colors is $(V,E,c,\mu,h)$ where: 

\begin{enumerate}
\item $(V,E)$ is a finite undirected graph (called the \emph{underlying graph} of the Adinkra) with vertex set\footnote{In \cite{d2l:first,d2l:graph-theoretic}, there is also a bipartition of the vertices, where some vertices are represented by open circles and called bosons, and other vertices are represented by filled circles and called fermions.  This is not necessary to include in our definition, because the bipartition can be obtained directly by taking the grading $h$ modulo $2$, which is a bipartition by property 4 below.} $V$ and edge set $E$.
\item $c:E\to [n] := \{1,\ldots,n\}$ is a map called the \emph{coloring}. We require that for every $v\in V$ and $i \in [n]$, there exists exactly one $w\in V$ so that $(v,w)\in E$ and $c(v,w)=i$. We also require that every two-colored simple cycle be of length $4$ (A \emph{simple} cycle is one which does not repeat vertices other than the starting vertex; A \emph{two-colored} cycle is one where the set of colors of the edges has cardinality $2$).
\item $\mu:E\to \ZZ_2=\{0,1\}$ is a map called the \emph{dashing}.  The \emph{parity} of $\mu$ on a cycle given by vertices $(v_0,\ldots,v_k)$ is defined as the sum
\[\sum_{i=0}^{k-1}\mu(v_i,v_{i+1})\pmod{2}.\]
We require that the parity of $\mu$ on every two-colored simple cycle to be odd. Such a dashing $\mu$ is called \emph{admissible}.
\item $h:V\to\ZZ$ is a map called the \emph{grading}. We require that if $(v,w)\in E$, then $|h(v)-h(w)|=1$. Equivalently, $h$ provides a height function that makes $(V,E)$ into the Hasse diagram of a ranked poset.
\end{enumerate}
\end{definition}

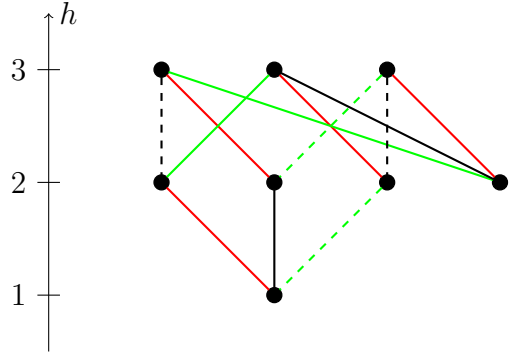
\begin{figure}
\begin{center}
\begin{tikzpicture}[scale=0.15]
\SetVertexSimple[MinSize=5pt]
\SetUpEdge[labelstyle={draw}]
\Vertex[x=0,y=0]{A}
\Vertex[x=0,y=10]{B}
\Vertex[x=0,y=20]{C}
\Vertex[x=20,y=10]{D}
\Vertex[x=-10,y=10]{E}
\Vertex[x=10,y=10]{F}
\Vertex[x=-10,y=20]{G}
\Vertex[x=10,y=20]{H}
\Edge [color=red](G)(B)
\Edge[color=red](D)(H)
\Edge[color=red](C)(F)
\Edge[color=red](E)(A)
\Edge[color=green](D)(G)
\Edge[color=green, style=dashed](H)(B)
\Edge[color=green](C)(E)
\Edge[color=green, style=dashed](F)(A)
\Edge(D)(C)
\Edge[style=dashed](H)(F)
\Edge[style=dashed](G)(E)
\Edge(B)(A)
\draw [->] (-20,-5) -- (-20,25);
\draw (-19,0) -- (-21,0) node [align=right, left] {$1$};
\draw (-19,10) -- (-21,10) node [align=right, left] {$2$};
\draw (-19,20) -- (-21,20) node [align=right, left] {$3$};
\node [right] at (-20,25) {$h$};
\end{tikzpicture}
\caption{Example of a $1$-d Adinkra with $3$ colors. The coloring is represented by colors on the edges. The dashing is represented by having a dashed edge if $\mu(e)=1$ and a solid edge if $\mu(e)=0$. The grading is represented by the vertical height as indicated on the axis on the left.
\label{fig:1d-examples}}
\end{center}
\end{figure}

Figure~\ref{fig:1d-examples} gives an example of a $1$-d Adinkra. 

\subsection{Structural Aspects of $1$-d Adinkras}
\label{sec:code}

 Let $A$ be a $1$-d Adinkra with $n$ colors, with vertex set $V$.  For all $i\in [n]$, define $q_i:V\to V$ such that for all $v\in V$, $q_i(v)$ is the unique vertex joined to $v$ by an edge of color $i$.  In \cite{d2l:omni}, it was shown that the map $q_i$ is a graph isomorphism (in fact, an involution) from the underlying graph of $A$ to itself which preserves colors. The $q_i$ commute with each other. These facts can be used to combine the $q_1,\ldots, q_n$ maps into an action of $\ZZ_2^n$ on the graph $(V,E)$ underlying the Adinkra in the following way:
\begin{definition}
The action of $\ZZ_2^n$ on the graph $(V,E)$ underlying the Adinkra is given on vertices by
\[(x_1,\ldots,x_n)v=q_1^{x_1}\circ\cdots\circ q_n^{x_n}(v).\]
\end{definition}

Intuitively, the action of a sequence of bits, for instance, $11001$, on a vertex is obtained by following edges with colors that correspond to $1$'s in the sequence (in this case, colors $1$, $2$, and $5$).  The fact that the $q_i$'s commute implies that the order of the colors does not matter.

The Adinkra $A$ is connected if and only if the $\ZZ_2^n$ action is transitive on the vertex set of $A$.  In this case the stabilizers of all vertices are equal (in general the stabilizers of two points in the same orbit are conjugate; here we know more since the group is abelian).  Define $C(A)$, the \emph{code of the Adinkra} $A$, to be this stabilizer.  This is a \emph{binary linear code} of length $n$ (i.e., a linear subspace of $\ZZ_2^n$). As these are the only types of codes we use, from now on we simply say \emph{code} to mean ``binary linear code.''

We call the elements of a code  \emph{codewords}. The \emph{weight} of a codeword $w$ is the number of $1$'s in the word. A code is called \emph{even} if all its codewords have even weight. A code is called \emph{doubly-even} if all its codewords have weight divisible by $4$. An example of a doubly-even code is the span $\langle 111100, 001111\rangle$, which has $2^2 = 4$ elements. An example of a code that is even but not doubly-even is the $1$-dimensional code $\langle 11 \rangle$.

Codes are surprisingly relevant to the structural theory of Adinkras; in fact, one should basically think of the underlying graph of an Adinkra as a doubly-even code, as we now see.

\subsection{Quotients}
We now know that the stabilizer of our action on the graph is a code. We can also go in the opposite direction: let $I^n$, the \emph{Hamming cube}, be the graph with $2^n$ vertices labeled by strings of length $n$ using the alphabet $\{0, 1\}$, with an edge between two vertices $v$ and $w$ if and only if they differ in exactly one place. There is a natural coloring on $I^n$: just color each edge by the coordinate where the two vertices differ.  Now, codes in $\ZZ_2^n$ act on $I^n$ by bitwise addition modulo $2$, and these are isomorphisms that preserve colors.  A natural operation to consider on a colored graph $\Gamma=(V,E,c)$ and a group $C$ acting on $V$ via graph isomorphisms that preserve colors is the \emph{quotient} $\Gamma/C$, where the vertices are defined to be orbits in $V/C$, and we have an $c$-colored edge $(v,w)$ if and only if there is at least one $c$-colored edge $(v',w') \in E$ with $v'$ in the orbit $v$ and $w'$ in the orbit $w$. 

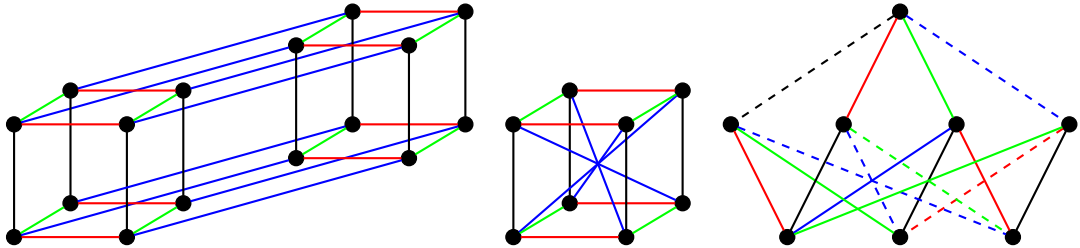
\begin{figure}[htb]
\begin{center}

\begin{tabular}{ccc}
\begin{tikzpicture}[scale=0.15]
\SetVertexSimple[MinSize=5pt]
\SetUpEdge[labelstyle={draw}]
\Vertex[x=0,y=0]{A}
\Vertex[x=0,y=10]{B}
\Vertex[x=10,y=0]{C}
\Vertex[x=5,y=3]{D}
\Vertex[x=15,y=13]{E'}
\Vertex[x=10,y=10]{F}
\Vertex[x=5,y=13]{G}
\Vertex[x=15,y=3]{H}
\Vertex[x=40,y=20]{A'}
\Vertex[x=40,y=10]{B'}
\Vertex[x=30,y=20]{C'}
\Vertex[x=35,y=17]{D'}
\Vertex[x=25,y=7]{E}
\Vertex[x=30,y=10]{F'}
\Vertex[x=35,y=7]{G'}
\Vertex[x=25,y=17]{H'}
\Edge[color=black](A)(B)
\Edge[color=black](A')(B')
\Edge[color=red](A)(C)
\Edge[color=red](A')(C')
\Edge[color=green](A)(D)
\Edge[color=green](A')(D')
\Edge[color=blue](A)(E)
\Edge[color=blue](A')(E')
\Edge[color=red](B')(F')
\Edge[color=green](B)(G)
\Edge[color=green](B')(G')
\Edge[color=blue](B)(H')
\Edge[color=blue](B')(H)
\Edge[color=black](C')(F')
\Edge[color=green](C)(H)
\Edge[color=green](C')(H')
\Edge[color=blue](C)(G')
\Edge[color=blue](C')(G)
\Edge[color=black](D)(G)
\Edge[color=black](D')(G')
\Edge[color=red](B)(F)
\Edge[color=red](D)(H)
\Edge[color=red](D')(H')
\Edge[color=black](C)(F)
\Edge[color=blue](D)(F')
\Edge[color=blue](D')(F)
\Edge[color=black](E)(H')
\Edge[color=black](E')(H)
\Edge[color=red](E)(G')
\Edge[color=red](E')(G)
\Edge[color=green](E)(F')
\Edge[color=green](E')(F)
\end{tikzpicture}
&
\begin{tikzpicture}[scale=0.15]
\SetVertexSimple[MinSize=5pt]
\SetUpEdge[labelstyle={draw}]
\Vertex[x=0,y=0]{A}
\Vertex[x=0,y=10]{B}
\Vertex[x=10,y=0]{C}
\Vertex[x=5,y=3]{D}
\Vertex[x=15,y=13]{E}
\Vertex[x=10,y=10]{F}
\Vertex[x=5,y=13]{G}
\Vertex[x=15,y=3]{H}
\Edge[color=black](D)(G)
\Edge[color=red](D)(H)
\Edge[color=black](E)(H)
\Edge[color=red](E)(G)
\Edge[color=blue](A)(E)
\Edge[color=blue](B)(H)
\Edge[color=blue](C)(G)
\Edge[color=blue](D)(F)
\Edge[color=green](A)(D)
\Edge[color=green](B)(G)
\Edge[color=green](E)(F)
\Edge[color=green](C)(H)
\Edge[color=black](A)(B)
\Edge[color=red](A)(C)
\Edge[color=red](B)(F)
\Edge[color=black](C)(F)
\end{tikzpicture}
&
\begin{tikzpicture}[scale=0.15]
\SetVertexSimple[MinSize=5pt]
\SetUpEdge[labelstyle={draw}]
\Vertex[x=15,y=20]{0000}
\Vertex[x=0,y=10]{0001}
\Vertex[x=10,y=10]{0010}
\Vertex[x=20,y=10]{0100}
\Vertex[x=30,y=10]{0111}
\Vertex[x=5,y=0]{0011}
\Vertex[x=15,y=0]{0101}
\Vertex[x=25,y=0]{0110}
\Edge[color=black](0100)(0101)
\Edge[color=red](0100)(0110)
\Edge[color=black](0111)(0110)
\Edge[color=red, style=dashed](0111)(0101)
\Edge[color=blue,style=dashed](0000)(0111)
\Edge[color=blue,style=dashed](0001)(0110)
\Edge[color=blue,style=dashed](0010)(0101)
\Edge[color=blue](0100)(0011)
\Edge[color=green](0000)(0100)
\Edge[color=green](0001)(0101)
\Edge[color=green](0111)(0011)
\Edge[color=green, style=dashed](0010)(0110)
\Edge[color=black, style=dashed](0000)(0001)
\Edge[color=red](0000)(0010)
\Edge[color=red](0001)(0011)
\Edge[color=black](0010)(0011)
\end{tikzpicture}
\end{tabular}
\caption{Left: the colored graph $I^4$. Middle: the quotient $I^4/\{0000,1111\}$. Right: as the code $\{0000, 1111\}$ is doubly-even, there exists an Adinkra with the quotient as its underlying graph by Theorem~\ref{thm:1d-quotients}. \label{fig:4cube folding}}
\end{center}
\end{figure}

\begin{thm}
\label{thm:1d-quotients}
$I^n/C$ is the colored graph of some $1$-d Adinkra if and only if $C$ is a doubly-even code.
\end{thm}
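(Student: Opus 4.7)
The plan is to prove both directions by translating each axiom of the Adinkra definition, applied to $I^n/C$, into a condition on the code $C$. I claim the proper-coloring axiom rules out weight-$1$ codewords, the 2-colored 4-cycle axiom additionally rules out weight-$2$ codewords, the existence of a grading is equivalent to $C$ being even, and the existence of an admissible dashing imposes the final divisibility condition. Together, these force $C$ to be doubly-even, proving the biconditional.

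The first three axioms are relatively short. The $C$-action on $I^n$ is color-preserving, so the coloring descends to $I^n/C$. The color-$i$ edge from $[v]$ degenerates iff $v \sim v + e_i$, i.e., $e_i \in C$, giving the weight-$1$ constraint. Every 2-colored simple cycle of length $\ell$ in $I^n/C$ lifts to a color-alternating walk in $I^n$ using two colors $\{i,j\}$, whose net displacement lies in $C \cap \langle e_i, e_j\rangle$. A case analysis on $\ell$ shows that $\ell \in \{1, 2, 3\}$ forces a codeword of weight $\leq 2$ with support in $\{i,j\}$, while $\ell = 4$ comes from displacement $0$ and $\ell \geq 5$ cannot yield simple cycles under the no-weight-$\leq 2$ assumption. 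Finally, a grading $h$ gives a bipartition via $h \bmod 2$ (and any bipartition lifts to an integer grading), so axiom (4) is equivalent to bipartiteness, which descends from the Hamming-weight parity of $I^n$ iff all codewords have even weight.

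The main effort is the admissible-dashing axiom. Take the explicit dashing $\mu(v, v+e_i) := \sum_{j<i} v_j \pmod 2$ on $I^n$; a one-line check on 2-colored squares shows $\mu$ is admissible. An admissible dashing on $I^n/C$ is precisely a $C$-invariant admissible dashing on $I^n$, and every admissible dashing on $I^n$ is of the form $\mu + \delta\sigma$ for some $\sigma:V(I^n)\to\ZZ_2$. Imposing $C$-invariance reduces (after computing $\mu(v+c, v+c+e_i) - \mu(v, v+e_i) = \sum_{j<i} c_j$) to requiring, for each $c \in C$, that $f_c(v) := \sigma(v+c) + \sigma(v)$ satisfy $f_c(v) + f_c(v+e_i) = \sum_{j<i} c_j$. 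This forces $f_c$ to be affine in $v$, and the self-consistency condition $f_c(v+c) = f_c(v)$ yields the obstruction $\binom{|c|}{2} \equiv 0 \pmod 2$. Combined with evenness, this pins $|c| \equiv 0 \pmod 4$, i.e., $C$ is doubly-even.

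The main obstacle is simultaneous solvability: a single $\sigma$ must realize the $f_c$'s for all $c \in C$ at once. This reduces to a 2-cocycle condition $g(c+c') = g(c) + g(c') + E(c, c')$ on $g(c) := f_c(0)$, where $E(c, c') = \sum_i c_i \sum_{j<i} c'_j$. When $C$ is doubly-even, I verify via the parallelogram identity $|c+c'| = |c|+|c'|-2\langle c, c'\rangle$ and the standard fact that doubly-even codes are self-orthogonal ($\langle c, c'\rangle \equiv 0 \pmod 2$) that the explicit choice $g(c) := (|c|/4) \bmod 2$ solves this cocycle equation. This produces a valid $\sigma$, hence an admissible dashing on $I^n/C$, and completes the construction.
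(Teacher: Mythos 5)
The paper does not actually prove Theorem~\ref{thm:1d-quotients}; it cites \cite{d2l:omni} for the original proof and \cite{zhang:adinkras} for a generalization, so your argument has to stand on its own. Most of it does, and it follows the standard route: weight-$1$ and weight-$2$ codewords are excluded by the coloring and two-colored-$4$-cycle axioms, the existence of a grading is equivalent to evenness via bipartiteness, the reduction of admissible dashings on $I^n/C$ to $C$-invariant dashings of the form $\mu+\delta\sigma$ on $I^n$ is correct (granting the asserted fact that the two-colored squares span the cycle space of $I^n$, so every admissible dashing differs from your explicit $\mu$ by a coboundary), and the necessity computation $\binom{\wt(\vec{c})}{2}\equiv 0\pmod 2$, combined with evenness, correctly forces $\wt(\vec{c})\equiv 0\pmod 4$.

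The gap is in the final existence step. For $g(\vec{c})=(\wt(\vec{c})/4)\bmod 2$ to solve $g(\vec{c}+\vec{c}')=g(\vec{c})+g(\vec{c}')+E(\vec{c},\vec{c}')$, the parallelogram law shows you would need $E(\vec{c},\vec{c}')\equiv\tfrac12\langle\vec{c},\vec{c}'\rangle_{\ZZ}\pmod 2$, and this is false. Take $\vec{c}=11011000$ and $\vec{c}'=10110100$; together with $\vec{c}+\vec{c}'=01101100$ they form a doubly-even code, the integer overlap is $\langle\vec{c},\vec{c}'\rangle_{\ZZ}=2$ so the right-hand side is $1$, but $E(\vec{c},\vec{c}')=0+1+2+3=6\equiv 0$. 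Self-orthogonality only tells you the overlap is even, not what its half is modulo $2$, and $E$ depends on the positions of the $1$'s rather than just on the overlap, so no formula in $\wt(\vec{c})$ alone can work. The step is repairable without disturbing your architecture: you have already established that $E$ restricted to $C\times C$ is symmetric and that $E(\vec{c},\vec{c})=\binom{\wt(\vec{c})}{2}\equiv 0$ precisely because $C$ is doubly even, and any symmetric $\ZZ_2$-bilinear form with vanishing diagonal admits a quadratic refinement: fix a basis $\vec{b}_1,\dots,\vec{b}_k$ of $C$ and set $g\bigl(\sum_i\epsilon_i\vec{b}_i\bigr)=\sum_{i<j}\epsilon_i\epsilon_j E(\vec{b}_i,\vec{b}_j)$. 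A direct check shows this $g$ satisfies the cocycle equation, which completes your construction of $\sigma$ and hence of the admissible dashing on $I^n/C$.
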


See \cite{d2l:omni} for the original proof of this result. See \cite{zhang:adinkras} for a more general treatment of quotienting by a code and a slightly extended correspondence\footnote{Note that the quotient $\Gamma/C$ does not necessarily retain nice properties of $\Gamma$; it does not even have to be a simple graph. It may also have edges with different colors between two vertices. Part of the work here is to show these pathologies do not happen when $C$ is a doubly-even code.} between graph properties of the quotient $I^n/C$ and properties of the code $C$.

Figure~\ref{fig:4cube folding} provides an example of a quotient of $I^4$ by a code that obeys this theorem. Encoded within the proof of Theorem~\ref{thm:1d-quotients} is the fact that if $C$ is a doubly-even code, then there exists an admissible dashing. A constructive proof of existence can be found in \cite{d2l:topology}. See \cite{zhang:adinkras} for an enumeration of all admissible dashings for any doubly-even code.

\subsection{Vertex Switching}
\label{sec:vertexswitch}
Vertex switching was first introduced in the context of Adinkras in \cite{d2l:first} and is more thoroughly set in its context in \cite{dil:cohomology,zhang:adinkras}.

\begin{definition}[Vertex switching]
Given an Adinkra $A$, and a vertex $v$ of $A$, we define \emph{vertex switching at $v$} to be the operation on $A$ that returns a new Adinkra $\bar{A}$ with the same vertices, edges, coloring, and grading but a new dashing $\bar{\mu}$ so that
\begin{equation}
\bar{\mu}(e)=\begin{cases}
1-\mu(e),&\mbox{if $e$ is incident to $v$}\\
\mu(e),&\mbox{otherwise.}
\end{cases}
\end{equation}
We leave to the reader to check that $\bar{\mu}$ is still an admissible dashing; since the vertices, edges, coloring, and grading(s) are the same, $\bar{A}$ remains an Adinkra. We also use a \emph{vertex switching of $A$} to refer to a composition of vertex switchings at various vertices of $A$. 
\end{definition}

\begin{figure}[htb]
\begin{center}
\begin{tabular}{cc}
\begin{tikzpicture}[scale=0.15]
\SetVertexSimple[MinSize=5pt]
\SetUpEdge[labelstyle={draw}]
\Vertex[x=0,y=0]{A}
\Vertex[x=0,y=10]{B}
\Vertex[x=10,y=0]{C}
\Vertex[x=5,y=3]{D}
\Vertex[x=15,y=13]{E}
\Vertex[x=10,y=10]{F}
\Vertex[x=5,y=13]{G}
\Vertex[x=15,y=3]{H}
\Edge[color=black,style=dashed](A)(B)
\Edge[color=red](A)(C)
\Edge[color=green](A)(D)
\Edge[color=black](D)(G)
\Edge[color=red](B)(F)
\Edge[color=green](B)(G)
\Edge[color=green,style=dashed](C)(H)
\Edge[color=red](D)(H)
\Edge[color=black](C)(F)
\Edge[color=black](E)(H)
\Edge[color=red,style=dashed](E)(G)
\Edge[color=green](E)(F)
\node[text width=1cm] at (19,15) {$v$};
\node[text width=1cm] at (10,-5){$A$};
\end{tikzpicture}
&
\begin{tikzpicture}[scale=0.15]
\SetVertexSimple[MinSize=5pt]
\SetUpEdge[labelstyle={draw}]
\Vertex[x=0,y=0]{A}
\Vertex[x=0,y=10]{B}
\Vertex[x=10,y=0]{C}
\Vertex[x=5,y=3]{D}
\Vertex[x=15,y=13]{E}
\Vertex[x=10,y=10]{F}
\Vertex[x=5,y=13]{G}
\Vertex[x=15,y=3]{H}
\Edge[color=black,style=dashed](A)(B)
\Edge[color=red](A)(C)
\Edge[color=green](A)(D)
\Edge[color=black](D)(G)
\Edge[color=red](B)(F)
\Edge[color=green](B)(G)
\Edge[color=green,style=dashed](C)(H)
\Edge[color=red](D)(H)
\Edge[color=black](C)(F)
\Edge[color=black,style=dashed](E)(H)
\Edge[color=red](E)(G)
\Edge[color=green,style=dashed](E)(F)
\node[text width=1cm] at (19,15) {$v$};
\node[text width=1cm] at (10,-5){$\bar{A}$};
\end{tikzpicture}
\end{tabular}
\caption{A vertex switching at $v$ turns the Adinkra $A$ on the left into the Adinkra $\bar{A}$ on the right.  The two Adinkras have the same dashing except precisely the edges that are incident to $v$.  Note that in both cases, each face of the cube has an odd number of dashed edges.\label{fig:vertexswitch}}
\end{center}
\end{figure}
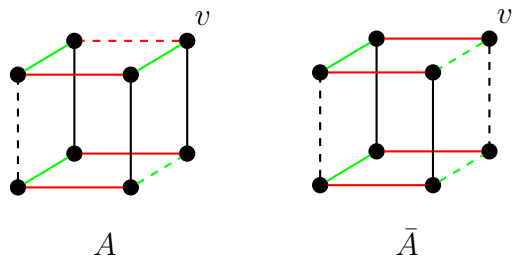

In \cite{douglas}, vertex switching was first applied to dashings in Adinkras from a point of view inspired by Seidel's \emph{two-graphs} \cite{seidel:survey}.\footnote{In Seidel's setting, \emph{vertex switching} switched the existence of edges, not the sign of edges; this can be seen as equivalent our definition applied to the complete graph. The type of vertex switching we do in this paper is sometimes called vertex switching on \emph{signed graphs} in literature for disambiguation.} An enumeration of vertex switching classes leading to counting the number of dashings of any $1$-d Adinkra can be found in \cite{zhang:adinkras}.

\section{$2$-d Adinkras}
\label{sec:2d}
Just as $1$-d Adinkras were used to study $1$-d supersymmetry, Gates and H\"ubsch developed $2$-d Adinkras to study $2$-d supersymmetry.\cite{gates:dimensional_extension,hubsch:weaving}  We use a definition here that is equivalent to the one found there.\footnote{The main notational difference is a kind of change of coordinates: there, nodes are labeled by \emph{mass dimension}, which is $h_L+h_R$, and \emph{spin}, which is $h_R-h_L$.  Mass dimension is the units of mass associated with the field, where $c=\hbar=1$ and spin is the eigenvalue of $x\partial_t+t \partial_x$.}

A $2$-d Adinkra is similar to a $1$-d Adinkra except that some colors are called ``left-moving'' and the other colors called ``right-moving''.  Edges are called ``left-moving'' if they are colored by left-moving colors, and right-moving otherwise.  Furthermore, there are two gradings, one that is affected by the left-moving edges and the other for the right-moving edges. More formally:
\begin{definition}[$2$-d Adinkras]
Let $p$ and $q$ be non-negative integers. A \emph{2-d Adinkra with $(p,q)$ colors} is a 1-d Adinkra $(V,E,c,\mu,h)$ with $p+q$ colors, and two grading functions $h_L:V\to \ZZ$ and $h_R:V\to \ZZ$ so that
\begin{itemize}
\item $h(v)=h_L(v)+h_R(v)$.
\item Let $e$ be an edge.  If $c(e)\le p$ then $e$ is called a \emph{left-moving edge}; if $c(e)>p$ then it is called a \emph{right-moving edge}. Similarly, the first $p$ colors are called \emph{left-moving colors} and the last $q$ colors are called \emph{right-moving colors}.
\item if $(v,w)$ is a left-moving edge, then $|h_L(v)-h_L(w)|=1$ and $h_R(v)=h_R(w)$.  If $(v,w)$ is a right-moving edge, then $|h_R(v)-h_R(w)|=1$ and $h_L(v)=h_L(w)$.
\end{itemize}
\end{definition}

\begin{figure}[htb]
\begin{center}
\begin{tikzpicture}[scale=0.05]
\SetVertexSimple[MinSize=5pt]
\node[text width=3cm] at (-10, 0) {(0,0)};
\node[text width=3cm] at (-35, 30) {(1,0)};
\node[text width=3cm] at (65, 30) {(0,1)};
\node[text width=3cm] at (-10, 60) {(1,1)};
\SetUpEdge[labelstyle={draw}]
\Vertex[x=0,y=0]{A}
\Vertex[x=-10,y=0]{H}
\Vertex[x=-35,y=30]{C}
\Vertex[x=-25,y=30]{B}
\Vertex[x=25,y=30]{D}
\Vertex[x=15,y=30]{E}
\Vertex[x=0,y=60]{G}
\Vertex[x=-10,y=60]{F}
\Edge[color=red](A)(C)
\Edge[color=red](B)(H)
\Edge[color=red](G)(E)
\Edge[color=red](F)(D)
\Edge[color=green](A)(D)
\Edge[color=green, style=dashed](E)(H)
\Edge[color=green](G)(B)
\Edge[color=green, style=dashed](F)(C)
\Edge[color=blue, style=dashed](C)(H)
\Edge[color=blue](B)(A)
\Edge[color=blue, style=dashed](G)(D)
\Edge[color=blue](F)(E)
\Edge[color=black, style=dashed](D)(H)
\Edge[color=black, style=dashed](A)(E)
\Edge[color=black](G)(C)
\Edge[color=black](B)(F)
\end{tikzpicture}
\caption{A $2$-d Adinkra with $(2,2)$ colors. The grading coordinates are given next to the nodes as $(h_L, h_R)$. \label{fig:2d-example}}
\end{center}
\end{figure}
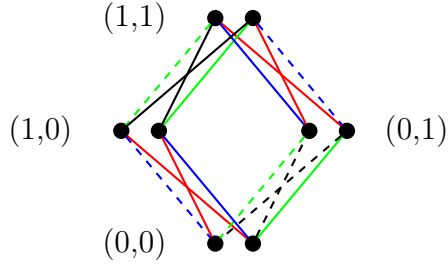

See Figure~\ref{fig:2d-example} for an example of a $2$-d Adinkra. The main goal of this paper is to follow the program set out in \cite{hubsch:weaving} and completely characterize $2$-d Adinkras. As a first step, we define the natural notion of \emph{products} in the following section.

\section{Products}
\label{sec:products}
One important way to produce a $2$-d Adinkra with $(p,q)$ colors is to take the product of two $1$-d Adinkras (one with $p$ colors, and the other with $q$ colors), using the following construction.

\begin{construction}
\label{const:product}
Let $p$ and $q$ be non-negative integers.  Let $A_1=(V_1, E_1, c_1, \mu_1,h_1)$ be a $1$-d Adinkra with $p$ colors and let $A_2=(V_2, E_2, c_2, \mu_2,h_2)$ be a $1$-d Adinkra $q$ colors.  We define the \emph{product} of these Adinkras $A_1\times A_2$ as the following 2-Adinkra with $(p,q)$ colors:
\[A_1\times A_2=(V,E,c,\mu,h_1,h_2)\]
where $V=V_1\times V_2$ and there are two kinds of edges in $E$:
\begin{itemize}
\item For every edge $e$ in $E_1$ connecting vertices $v$ and $w\in V_1$, and for every vertex $x\in V_2$, we have an edge in $E$ between vertices $(v,x)$ and $(w,x)$ in $V=V_1\times V_2$ of color $c_1(e)$ and dashing $\mu_1(e)$.
\item For every edge $e$ in $E_2$ connecting vertices $v$ and $w\in V_2$ and for every vertex $x\in V_1$, we have an edge in $E$ between vertices $(x,v)$ and $(x,w)$ in $V=V_1\times V_2$ of color $p+c_2(e)$ and dashing $\mu_2(e)+h_1(x)\pmod{2}$.
\end{itemize}
 See Figure~\ref{fig:product} for an example.
\end{construction}

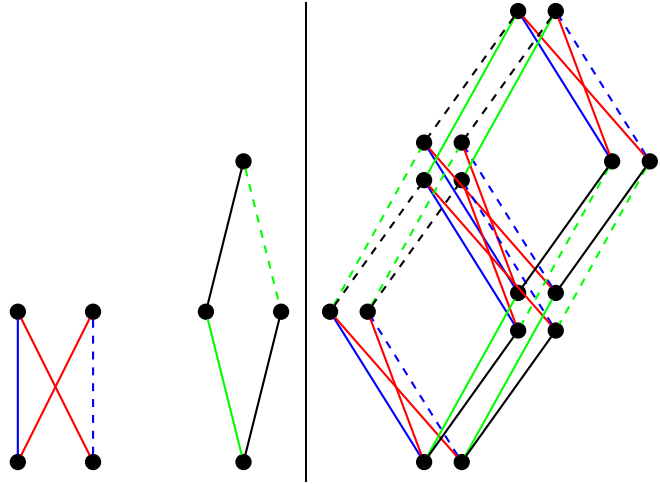
\begin{figure}
\begin{center}
\begin{tabular}{c|c}
\begin{tikzpicture}[scale=0.1]
\SetVertexSimple[MinSize=5pt]
\SetUpEdge[labelstyle={draw}]
\Vertex[x=0,y=0]{A}
\Vertex[x=10,y=0]{B}
\Vertex[x=0,y=20]{E}
\Vertex[x=10,y=20]{F}
\Edge[color=blue](A)(E)
\Edge[color=blue, style=dashed](B)(F)
\Edge[color=red](A)(F)
\Edge[color=red](B)(E)

\Vertex[x=30,y=0]{A'}
\Vertex[x=30,y=40]{B'}
\Vertex[x=25,y=20]{E'}
\Vertex[x=35,y=20]{F'}
\Edge[color=green](A')(E')
\Edge[color=green, style=dashed](B')(F')
\Edge[color=black](A')(F')
\Edge[color=black](B')(E')
\end{tikzpicture}
&
\begin{tikzpicture}[scale=0.05]
\SetVertexSimple[MinSize=5pt]
\SetUpEdge[labelstyle={draw}]
\Vertex[x=0,y=0]{BA}
\Vertex[x=-10,y=0]{AA}

\Vertex[x=-35,y=40]{EA}
\Vertex[x=-25,y=40]{FA}

\Vertex[x=15,y=35]{AF}
\Vertex[x=25,y=35]{BF}
\Vertex[x=15,y=45]{AE}
\Vertex[x=25,y=45]{BE}

\Vertex[x=-10,y=75]{EF}
\Vertex[x=0,y=75]{FF}
\Vertex[x=-10,y=85]{EE}
\Vertex[x=0,y=85]{FE}

\Vertex[x=40,y=80]{AB}
\Vertex[x=50,y=80]{BB}
\Vertex[x=15,y=120]{EB}
\Vertex[x=25,y=120]{FB}

\Edge[color=blue](AA)(EA)
\Edge[color=blue, style=dashed](BA)(FA)
\Edge[color=red](AA)(FA)
\Edge[color=red](BA)(EA)
\Edge[color=blue](AB)(EB)
\Edge[color=blue, style=dashed](BB)(FB)
\Edge[color=red](AB)(FB)
\Edge[color=red](BB)(EB)
\Edge[color=blue](AE)(EE)
\Edge[color=blue, style=dashed](BE)(FE)
\Edge[color=red](AE)(FE)
\Edge[color=red](BE)(EE)
\Edge[color=blue](AF)(EF)
\Edge[color=blue, style=dashed](BF)(FF)
\Edge[color=red](AF)(FF)
\Edge[color=red](BF)(EF)

\Edge[color=green](AA)(AE)
\Edge[color=green, style=dashed](AB)(AF)
\Edge[color=black](AA)(AF)
\Edge[color=black](AB)(AE)
\Edge[color=green](BA)(BE)
\Edge[color=green, style=dashed](BB)(BF)
\Edge[color=black](BA)(BF)
\Edge[color=black](BB)(BE)

\Edge[color=green, style=dashed](EA)(EE)
\Edge[color=green](EB)(EF)
\Edge[color=black, style=dashed](EA)(EF)
\Edge[color=black, style=dashed](EB)(EE)
\Edge[color=green, style=dashed](FA)(FE)
\Edge[color=green](FB)(FF)
\Edge[color=black, style=dashed](FA)(FF)
\Edge[color=black, style=dashed](FB)(FE)

\end{tikzpicture}
\end{tabular}
\caption{Constructing the $2$-d Adinkra (right) from two smaller $1$-d Adinkras (left) as a product. Note that the dashings are all ``consistent'' with the smaller Adinkras, except for the right-moving edges on the upper-left ``boundary'' of the rectangle; these correspond to right-moving edges where the grading corresponding to the first Adinkra has height $1$.\label{fig:product}}
\end{center}
\end{figure}

This definition is intended to be a graph-theoretic version of the tensor product construction in $\ZZ_2$-graded representations (see Appendix~\ref{app:repn} for more details).  The edges that come from $E_1$ give rise to left-moving edges, and the edges that come from $E_2$ give rise to right-moving edges.  It follows easily that for every vertex $v$ in $A_1\times A_2$ and for every color in $[n]$ there is a unique edge in $A_1\times A_2$ incident to $v$.  The fact that two-colored simple cycles have length $4$ follows from following cases, depending on whether the colors are both left-moving, both right-moving, or one of each.  The parity condition for an Adinkra also follows from considering these cases.  The properties related to the bigrading are straightforward.  We then have:

\begin{prop}
\label{prop:product-admissible}
Given Adinkras $A_1$ and $A_2$, $A_1\times A_2$ is a $2$-d Adinkra.
\end{prop}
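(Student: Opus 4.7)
The plan is to verify in turn each of the defining properties of a 2-d Adinkra for $A_1\times A_2$; most are immediate from Construction~\ref{const:product}, and the only non-trivial step is checking that the dashing $\mu$ is admissible.

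First I would verify the local structure. For each vertex $(v,x) \in V_1\times V_2$ and each color $i \in [p+q]$, there is exactly one edge of color $i$ incident to $(v,x)$: this follows from uniqueness in $A_1$ when $i \le p$ and from uniqueness in $A_2$ when $i>p$. I would then enumerate two-colored simple cycles by fixing colors $i<j$ and splitting into three cases: if both are left-moving, the cycle lies inside $V_1\times\{x\}$ for some $x$ and is inherited from a cycle in $A_1$; if both are right-moving, it lies inside $\{v\}\times V_2$ and is inherited from $A_2$; and if one of each, it is a rectangle of the form $(v,x),(w,x),(w,y),(v,y)$. In all three cases the cycle has length $4$. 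The bigrading conditions are equally direct: left-moving edges affect only the $V_1$ coordinate, so $h_1$ changes by $\pm1$ and $h_2$ is constant across such edges, and the mirror statement holds for right-moving edges; summing yields that $h=h_1+h_2$ changes by $\pm 1$ across every edge.

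The main work, and the point at which the $h_1(x)$ shift in the dashing formula earns its keep, is to show that $\mu$ has odd parity on every two-colored simple $4$-cycle. In the both-left-moving case the $4$-cycle carries the $\mu_1$-values of the corresponding cycle in $A_1$, so its parity is odd by admissibility of $\mu_1$. In the both-right-moving case each of the four edges picks up the same shift $h_1(x)\pmod{2}$, which cancels since $4h_1(x)\equiv 0\pmod 2$, and again the parity reduces to that of an admissible cycle in $A_2$.

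The mixed case is the delicate one. For the rectangle $(v,x),(w,x),(w,y),(v,y)$ where $w=q_i(v)$ in $A_1$ and $y=q_{j-p}(x)$ in $A_2$, with $e$ the color-$i$ edge of $A_1$ and $e'$ the color-$(j-p)$ edge of $A_2$, I would compute the sum of dashings as
\[\mu_1(e) + \bigl(\mu_2(e') + h_1(w)\bigr) + \mu_1(e) + \bigl(\mu_2(e') + h_1(v)\bigr) \equiv h_1(v)+h_1(w) \pmod{2}.\]
Since $v$ and $w$ are adjacent in $A_1$, one has $|h_1(v)-h_1(w)|=1$, so the right-hand side is $1$. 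This single calculation is the crux: it is precisely what forces and explains the otherwise mysterious $+h_1(x)$ correction in the dashing rule for right-moving edges, and so I expect it to be the sole obstacle in an otherwise mechanical verification.
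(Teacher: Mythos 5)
Your proof is correct and follows exactly the case analysis that the paper itself sketches (uniqueness of colored edges, the three cases for two-colored cycles, and the bigrading checks); the paper merely asserts these cases "follow" without writing them out. Your explicit computation $2\mu_1(e)+2\mu_2(e')+h_1(v)+h_1(w)\equiv 1\pmod 2$ in the mixed case is the detail the paper omits, and it is the right one.
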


\begin{definition}[Extending codes]
Let $p$ and $q$ be non-negative integers and let $n=p+q$.  Define $Z_L:\ZZ_2^p\to\ZZ_2^n$ to be the function that appends $q$ zeros, so that for instance, if $p=4$ and $q=3$, then $Z_L(1011)=1011000$.  Likewise, define $Z_R:\ZZ_2^q\to\ZZ_2^n$ to be the function that prepends $p$ zeros.

Our most common use of this notation is as follows: if $C$ is a binary block code of length $p$, we write $Z_L(C)$ for the image under $Z_L$.  Likewise, if $C$ is a binary block code of length $q$, we write $Z_R(C)$ for the image under $Z_R$.
\end{definition}

\begin{prop}
\label{prop:prodcode}
Let $A_1$ and $A_2$ be as above.  Then
\[C(A_1\times A_2)=Z_L(C(A_1))\oplus Z_R(C(A_2)).\]
\end{prop}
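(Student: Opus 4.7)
The plan is to unwind the definition of $C(A)$ as the stabilizer of a vertex under the $\ZZ_2^n$-action by edge-following maps $q_i$, and then recognize that on a product this action decomposes coordinatewise into the two factor actions, from which the code description is immediate. (We assume $A_1$ and $A_2$ are connected so that $C(A_i)$ is vertex-independent; $A_1 \times A_2$ is then also connected, since one can reach $(v',x')$ from $(v,x)$ by first moving in the $V_1$ coordinate and then in the $V_2$ coordinate.)

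First, I would read off from Construction~\ref{const:product} how the edge-following maps in $A_1 \times A_2$ look. For $1 \le i \le p$, the unique color-$i$ neighbor of $(v,x) \in V_1 \times V_2$ is $(q_i^{A_1}(v), x)$, where $q_i^{A_1}$ denotes the edge-following map in $A_1$. Similarly, for $p < i \le p+q$, the unique color-$i$ neighbor of $(v,x)$ is $(v, q_{i-p}^{A_2}(x))$. Hence $q_i^{A_1 \times A_2} = q_i^{A_1} \times \mathrm{id}_{V_2}$ for $i \le p$, and $q_i^{A_1\times A_2} = \mathrm{id}_{V_1} \times q_{i-p}^{A_2}$ for $i > p$; these two families commute trivially because they act on disjoint coordinates of $V_1 \times V_2$.

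Next, writing a general element of $\ZZ_2^{p+q}$ as $(y, z)$ with $y \in \ZZ_2^p$ and $z \in \ZZ_2^q$, the previous step gives
\[(y, z) \cdot (v, x) = (y \cdot v,\; z \cdot x),\]
where on the right we mean the $\ZZ_2^p$-action on $A_1$ and the $\ZZ_2^q$-action on $A_2$. Consequently, $(y,z)$ stabilizes $(v,x)$ if and only if $y$ stabilizes $v$ in $A_1$ and $z$ stabilizes $x$ in $A_2$, that is, $y \in C(A_1)$ and $z \in C(A_2)$.

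Finally, I would translate this back into the $Z_L, Z_R$ notation. The subset of $\ZZ_2^{p+q}$ consisting of such $(y,z)$ is precisely $\{Z_L(y_0) + Z_R(z_0) : y_0 \in C(A_1),\; z_0 \in C(A_2)\}$, which equals $Z_L(C(A_1)) \oplus Z_R(C(A_2))$; the sum is literally direct since the images of $Z_L$ and $Z_R$ have disjoint coordinate supports. I do not foresee a genuine obstacle here: the only delicate point is matching the coordinate conventions of Construction~\ref{const:product} with those of $Z_L$ and $Z_R$, and these agree exactly because left-moving colors occupy positions $1,\dots,p$ and right-moving colors occupy positions $p+1,\dots,p+q$.
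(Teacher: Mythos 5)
Your proof is correct and follows essentially the same route as the paper: decompose $\vec{x}\in\ZZ_2^{p+q}$ into its left and right parts, observe that the action on $A_1\times A_2$ factors coordinatewise as $\vec{x}(v_1,v_2)=(\vec{x}_Lv_1,\vec{x}_Rv_2)$, and conclude that the stabilizer is the direct sum of the two stabilizers. Your extra care in reading off the $q_i$ maps from Construction~\ref{const:product} and in noting the connectedness needed for $C(A_i)$ to be well-defined is a harmless elaboration of the same argument.
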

\begin{proof}
Let $(v_1,v_2)\in A_1\times A_2$.  Let $\vec{x}\in \ZZ_2^N$.  We can write $\vec{x}=\vec{x}_L+\vec{x}_R$ where $\vec{x}_L$ is zero in the last $q$ bits and $\vec{x}_R$ is zero in the first $p$ bits.  Now
\[\vec{x}(v_1,v_2)=(\vec{x}_L+\vec{x}_R)(v_1,v_2)=(\vec{x}_Lv_1,\vec{x}_Rv_2).\]
This means that $\vec{x}(v_1,v_2)=(v_1,v_2)$ if and only if $\vec{x}_Lv_1=v_1$ and $\vec{x}_R v_2=v_2$. So $\vec{x}\in C(A_1\times A_2)$ if and only if $\vec{x}_L\in Z_L(C(A_1))$ and $\vec{x}_R\in Z_R(C(A_2))$.
\end{proof}

\section{Codes for $2$-d Adinkras}
\label{sec:code2d}
Let $A$ be a connected $2$-d Adinkra with $(p,q)$ colors.  Then there is a doubly even code $C(A)$ associated with $A$.  But as a $2$-d Adinkra, we make a distinction between the first $p$ colors and the last $q$ colors, which for a code translates to the first $p$ bits and the last $q$ bits. A natural question is: ``knowing that an $1$-d Adinkra can be enriched into a $2$-d Adinkra, what else can we say about its code?'' In this section, we address this question.

\begin{definition}[Weights for left-moving and right-moving colors; ESDE codes]
Recall that for any vector $\vec{x}\in\ZZ_2^n$, the \emph{weight} of $\vec{x}$, denoted $\wt(\vec{x})$, is the number of $1$'s in $\vec{x}$.  Likewise, $\wt_L(\vec{x})$ is the the number of $1$'s in the first $p$ bits and $\wt_R(\vec{x})$ is the number of $1$'s in the last $q$ bits of $\vec{x}$. Let a code $C$, along with the parameters $(p,q)$, be called a \emph{even-split doubly even (ESDE) code} if $C$ is doubly-even and all codewords $\vec{x}$ in $C$ have both $\wt_L(\vec{x})$ and $\wt_R(\vec{x})$ even.
\end{definition}

This definition of ESDE codes is due to \cite{hubsch:weaving}, which also proves the following: 
\begin{thm}
\label{thm:esde}
If $A$ is a connected $2$-d Adinkra with $(p,q)$ colors, then $C(A)$ is an ESDE.
\end{thm}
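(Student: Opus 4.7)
The plan is to verify the two conditions in the definition of ESDE separately. The ``doubly-even'' part is essentially free: since $A$ is, in particular, a $1$-d Adinkra with $n=p+q$ colors, its underlying colored graph has the form $I^n/C(A)$, so Theorem~\ref{thm:1d-quotients} immediately yields that $C(A)$ is doubly-even. The substantive content is therefore the even-split condition: for every codeword $\vec{x}\in C(A)$, both $\wt_L(\vec{x})$ and $\wt_R(\vec{x})$ should be even.

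For the even-split condition, I would trace the two gradings along the walk defined by $\vec{x}$. Fix a vertex $v\in V$ and a codeword $\vec{x}=(x_1,\ldots,x_n)$; by definition of $C(A)$ we have $q_1^{x_1}\circ\cdots\circ q_n^{x_n}(v)=v$. Reading the left-hand side as a closed walk based at $v$, it traverses exactly one edge of color $i$ for each $i$ with $x_i=1$. Along this walk, a left-moving edge changes $h_L$ by $\pm 1$ and leaves $h_R$ fixed, while a right-moving edge does the opposite. Since the walk returns to $v$, the net change in $h_L$ is zero; but that net change is a sum of $\wt_L(\vec{x})$ terms each $\pm 1$, which forces $\wt_L(\vec{x})$ to be even. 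The identical argument applied to $h_R$ shows $\wt_R(\vec{x})$ is even.

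I do not anticipate any genuine obstacle: the whole argument is bookkeeping on the bigrading, in the same spirit as the proof that $C(A)$ is even in the $1$-d setting. The only subtlety worth pointing out is that the walk depends a priori on the chosen order of the factors $q_i^{x_i}$, but since the $q_i$ pairwise commute the choice is immaterial, and the parity of left-moving (respectively right-moving) steps depends only on the multiset of colors $\{i:x_i=1\}$, not on their ordering.
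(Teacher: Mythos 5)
Your argument is correct. Note that the paper does not actually supply a proof of Theorem~\ref{thm:esde}: it attributes the result to \cite{hubsch:weaving} and only proves the converse direction via the Valise construction, so there is no in-paper argument to compare against. Your two-step proof---doubly-evenness falling out of the $1$-d theory via Theorem~\ref{thm:1d-quotients}, and even-splitness from the telescoping of $h_L$ (resp.\ $h_R$) along the closed walk determined by a codeword, where left-moving edges contribute $\pm 1$ to $h_L$ and right-moving edges contribute $0$---is the natural argument and correctly supplies what the paper defers to the reference; the remark about the ordering of the $q_i$ being immaterial is the right point to flag and is handled properly.
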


We now prove the converse of this theorem.  That is, given an ESDE code, there exist connected $2$-d Adinkra with that code.  This procedure is analogous to the Valise Adinkras in $1$-d,\cite{d2l:first,d2l:graph-theoretic} in that the possible values of each component $(h_L,h_R)$ of the bigrading is as small as possible, i.e., two values.

\begin{construction}
\label{cons:valise}
Let $C$ be an ESDE code.  We will describe a construction that provides a $2$-d Adinkra with code $C$, called the {\em Valise 2-d Adinkra}. First, since $C$ is doubly-even, there exists a connected $1$-d Adinkra $A$ with code $C(A) = C$ by Theorem~\ref{thm:1d-quotients}. Fix a vertex $\overline{0}$ of $A$.  Now for every vertex $v$ there is a vector $\vec{x}\in\ZZ_2^n$ so that $\vec{x}\overline{0}=v$.  Then define $h_L(v)=\wt_L(\vec{x})\pmod{2}$ and $h_R(v)=\wt_R(\vec{x})\pmod{2}$.  Note that these functions are well-defined since $C$ is ESDE.  Then $(h_L, h_R)$ is a bigrading for $A$, making it a $2$-d Adinkra.  An example of the kind of $2$-d Adinkra that arises from this construction is Figure~\ref{fig:2d-example}.
\end{construction}

We therefore have:
\begin{thm}
\label{thm:esdeiff}
For a code $C \subset \ZZ_2^n$, there exists a $2$-d Adinkra $A$ with $C(A) = C$ if and only if $C$ is a ESDE code.
\end{thm}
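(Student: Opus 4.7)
The plan is to treat this as an ``if and only if'' where the forward direction is already done (Theorem~\ref{thm:esde}, cited from \cite{hubsch:weaving}), so the bulk of the work is verifying Construction~\ref{cons:valise} in detail for the converse. Assume $C \subset \ZZ_2^n$ is an ESDE code with split $(p,q)$. Since ESDE implies doubly-even, I would first invoke Theorem~\ref{thm:1d-quotients} to produce a connected $1$-d Adinkra $A$ whose underlying colored graph is $I^n/C$, equipped with some admissible dashing; this $A$ automatically satisfies $C(A) = C$.

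The main substantive step is showing that the proposed functions $h_L$ and $h_R$ are well-defined on $V(A)$. Fix a basepoint $\overline{0}$; for any vertex $v$, transitivity of the $\ZZ_2^n$ action (which follows from connectedness) lets me pick some $\vec{x} \in \ZZ_2^n$ with $\vec{x}\overline{0} = v$. If $\vec{y}$ is another such vector then $\vec{x}+\vec{y}$ stabilizes $\overline{0}$, hence lies in $C$; the ESDE hypothesis forces both $\wt_L(\vec{x}+\vec{y})$ and $\wt_R(\vec{x}+\vec{y})$ to be even. Since $\wt_L$ and $\wt_R$ are linear modulo $2$, this yields $\wt_L(\vec{x}) \equiv \wt_L(\vec{y})$ and $\wt_R(\vec{x}) \equiv \wt_R(\vec{y})$ mod $2$. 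This is precisely where the ``even-split'' strengthening of doubly-even is used, and I expect this to be the main conceptual content of the proof.

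Checking the remaining bigrading axioms then reduces to a short local computation. A color-$i$ edge joins $v = \vec{x}\overline{0}$ to $w = (\vec{x}+e_i)\overline{0}$, so if $i \le p$ then $\wt_L$ shifts by one while $\wt_R$ is unchanged, giving $|h_L(v)-h_L(w)| = 1$ and $h_R(v) = h_R(w)$; the case $i > p$ is symmetric. I would then redefine $h := h_L + h_R$, overriding whatever grading was inherited from $A$; adjacent vertices still differ by $\pm 1$ under $h$, so $h$ is a valid $1$-d Adinkra grading. The coloring and admissibility of the dashing are inherited from $A$ and so require no further checking.

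Finally, I would observe that $C(A')$ is the stabilizer of a vertex under the $\ZZ_2^n$-action on the colored graph underlying $A'$, and this graph together with its action is identical to that of $A$; hence $C(A') = C(A) = C$, completing the converse and the theorem. The only genuine obstacle in the whole argument is the well-definedness of $h_L, h_R$; every other verification is bookkeeping.
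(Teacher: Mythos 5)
Your proposal is correct and follows essentially the same route as the paper: the forward direction is Theorem~\ref{thm:esde}, and the converse is exactly Construction~\ref{cons:valise} (the Valise $2$-d Adinkra), with the key point being that the ESDE condition makes $h_L=\wt_L\pmod 2$ and $h_R=\wt_R\pmod 2$ well-defined on $I^n/C$. You simply spell out the linearity-of-weight-mod-$2$ argument and the local edge checks that the paper leaves implicit, including the (correct) observation that $h$ must be redefined as $h_L+h_R$.
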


The structure of the ESDE relates to interesting features of the colored graph of $A$.  Let $A_L$ be the $1$-d Adinkra with $p$ colors that consists of only the left-moving edges of $A$.  Let $A_R$ be the $1$-d Adinkra with $q$ colors that consists of only the right-moving edges of $A$ (where we shift the colors so that they range from $1$ to $q$ instead of $p+1$ to $p+q$).  Pick a vertex $\overline{0}$ in $A$.  Let $A_L^0$ be the connected component of $A_L$ containing $\overline{0}$ and let $A_R^0$ be the connected component of $A_R$ containing $\overline{0}$.

We now see that the codes for $A_L^0$ and $A_R^0$ (with an appropriate number of $0$s added to the left or right as necessary) provide important linear subspaces of $C(A)$.

\begin{prop}
\[Z_L(C(A_L^0))=C(A)\cap Z_L(\ZZ_2^p)\]
\[Z_R(C(A_R^0))=C(A)\cap Z_R(\ZZ_2^q)\]
\end{prop}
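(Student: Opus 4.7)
The plan is to prove the first identity by showing both sides consist of the same $\ZZ_2^n$ vectors, namely those of the form $Z_L(\vec{y})$ with $\vec{y}$ stabilizing $\overline{0}$ under the $\ZZ_2^p$ action on $A_L^0$; the second identity follows by the symmetric argument. The central observation is that applying the generator $q_i$ for $i \in [p]$ to a vertex $v$ moves along a left-moving edge and therefore keeps $v$ inside its connected component in $A_L$.

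More precisely, I would first note that any element of $Z_L(\ZZ_2^p)$ has zeros in its last $q$ coordinates, so its action on a vertex is a composition of only the left-moving color maps $q_1, \ldots, q_p$. Since each $q_i$ (for $i \le p$) is an edge of $A_L$, the orbit of $\overline{0}$ under $Z_L(\ZZ_2^p)$ lies entirely in $A_L^0$. Moreover, the $\ZZ_2^p$ action on $A_L$ preserves connected components (the generators $q_i$ are graph automorphisms of $A_L$ that move each vertex along an edge), so the action of $Z_L(\vec{y})$ on $\overline{0}$ in $A$ agrees with the action of $\vec{y}$ on $\overline{0}$ in $A_L^0$.

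Next I would translate the stabilizer condition. For $\vec{y} \in \ZZ_2^p$, the element $Z_L(\vec{y})$ belongs to $C(A) \cap Z_L(\ZZ_2^p)$ precisely when $Z_L(\vec{y}) \cdot \overline{0} = \overline{0}$ in $A$, which by the previous paragraph is equivalent to $\vec{y} \cdot \overline{0} = \overline{0}$ in $A_L^0$, that is, $\vec{y} \in C(A_L^0)$. Taking the image under $Z_L$ on both sides gives $C(A) \cap Z_L(\ZZ_2^p) = Z_L(C(A_L^0))$, as desired.

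I do not expect a substantial obstacle here; the argument is essentially a bookkeeping exercise in the definition of the $\ZZ_2^n$ action and the observation that it restricts compatibly to $A_L^0$. The only point requiring a touch of care is the well-definedness of the correspondence between $\ZZ_2^p$ acting on $A_L^0$ and $Z_L(\ZZ_2^p)$ acting on $A$, which follows from the fact that left-moving generators never leave the left-moving connected component. The right-moving identity is proved by exchanging the roles of $L$ and $R$ and of $p$ and $q$ throughout.
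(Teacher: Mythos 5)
Your argument is correct and follows essentially the same route as the paper: both rest on the observation that acting by $Z_L(\vec{y})$ on $\overline{0}$ in $A$ traverses only left-moving edges and hence agrees with the action of $\vec{y}$ on $\overline{0}$ inside the connected component $A_L^0$, so the two stabilizer conditions coincide. The paper phrases this as two separate inclusions (one direction via ``does exactly the same thing,'' the other via the path of left-moving colors staying in $A_L^0$), whereas you package it as a single equivalence, but the content is identical.
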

In other words, the codewords that are zero in the last $q$ bits are precisely the codewords from $C(A_L^0)$ with $q$ zeros appended to the right; and the codewords that are zero in the first $p$ bits are precisely the codewords from $C(A_R^0)$ with $p$ zeros prepended to the left.
\begin{proof}
If $\vec{x}\in Z_L(C(A_L^0))$, so that $\vec{x}=Z_L(\vec{y})$ for some $\vec{y}\in C(A_L^0)$.  Then trivially $\vec{x}\in Z_L(\ZZ_2^p)$.  Furthermore, $\vec{y}\overline{0}=\overline{0}$ in $A_L^0$.  In $A$, $Z_L(\vec{y})=\vec{x}$ does exactly the same thing, so $\vec{x}\overline{0}=\overline{0}$ in $A$.  Therefore $\vec{x}\in C(A)$.

Conversely if $\vec{x}\in C(A)\cap Z_L(\ZZ_2^p)$, then by the definition of the group action there is a path in $A$ from $\overline{0}$ to $\overline{0}$ following the colors corresponding to the $1$s in $\vec{x}$.  Since $\vec{x}\in Z_L(\ZZ_2^p)$, we have that this path only consists of left-moving colors, and so lies in $A_L^0$.

The proof for $Z_R(C(A_R^0))$ is similar.
\end{proof}

\begin{cor}
\label{cor:cplus}
\[Z_L(C(A_L^0))\oplus Z_R(C(A_R^0))\subset C(A)\]
\end{cor}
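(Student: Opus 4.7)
The plan is to deduce this as a near-immediate consequence of the preceding proposition together with the fact that $C(A)$ is a linear subspace of $\ZZ_2^n$. The previous proposition already gives the two containments $Z_L(C(A_L^0)) \subset C(A)$ and $Z_R(C(A_R^0)) \subset C(A)$, since each is identified with $C(A)$ intersected with a coordinate subspace. So the only real content of the corollary is to observe that the sum of these two subspaces still lies inside $C(A)$.

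First I would verify that the expression $Z_L(C(A_L^0)) \oplus Z_R(C(A_R^0))$ is actually a direct sum (internally, inside $\ZZ_2^n$), by noting that any element of $Z_L(\ZZ_2^p)$ has zeros in the last $q$ bits while any element of $Z_R(\ZZ_2^q)$ has zeros in the first $p$ bits, so the intersection is $\{\vec{0}\}$. This makes the $\oplus$ notation unambiguous and means every element of the left-hand side can be written uniquely as $Z_L(\vec{y}_1) + Z_R(\vec{y}_2)$ with $\vec{y}_1 \in C(A_L^0)$ and $\vec{y}_2 \in C(A_R^0)$.

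Next I would invoke closure of $C(A)$ under addition: since $C(A) \leq \ZZ_2^n$ is a linear code, and since both summands lie in $C(A)$ by the proposition, their sum $Z_L(\vec{y}_1) + Z_R(\vec{y}_2)$ also lies in $C(A)$. This gives the desired containment. No further case analysis, parity argument, or graph-theoretic reasoning is needed at this stage.

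There is no real obstacle here; the corollary is essentially a restatement of the proposition combined with linearity. The only mild subtlety to flag is that the containment is generally \emph{not} an equality: $C(A)$ may contain codewords whose support meets both the left and right color blocks nontrivially, and those codewords cannot be expressed as a sum of one codeword from $Z_L(C(A_L^0))$ and one from $Z_R(C(A_R^0))$. Noting this explicitly sets up motivation for the later sections, where the ``diagonal'' codewords in $C(A) \setminus (Z_L(C(A_L^0)) \oplus Z_R(C(A_R^0)))$ will be exactly what controls the quotient $\sim$ appearing in Theorem~\ref{thm:main}.
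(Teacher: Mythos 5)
Your proof is correct and matches the paper's (implicit) reasoning exactly: the paper states this as an immediate corollary of the preceding proposition, relying on precisely the observations you make — both summands lie in $C(A)$ by that proposition, the sum is direct because the supports are disjoint, and $C(A)$ is closed under addition. Your closing remark about the containment generally being strict is also consistent with the paper's subsequent introduction of the complement $K$.
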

Comparing with Proposition~\ref{prop:prodcode}, this corollary says that the code for $A_L^0\times A_R^0$ is a linear subspace of the code for $A$.

Simply for brevity (and thus, readability) in later descriptions, we define using the product construction above $A'=A_L^0\times A_R^0$, and the code $C\,'=Z_L(C(A_L^0))\oplus Z_R(C(A_R^0))$.  Also, we write $C$ for $C(A)$.  Then $C(A')=C\,'$ and $C\,'\subset C$.

\begin{lem}
\label{lem:existk}
There exists a binary linear block code $K$ so that
\[C=C\,' \oplus K.
\]
\end{lem}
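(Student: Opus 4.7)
The plan is to recognize that this lemma is purely a statement of linear algebra over $\mathbb{F}_2$, with no content specific to Adinkras beyond what is already established. Binary linear codes are by definition $\mathbb{F}_2$-subspaces of $\mathbb{F}_2^n$, and Corollary~\ref{cor:cplus} tells us that $C\,' \subset C$. So the lemma amounts to the standard fact that any subspace of a finite-dimensional vector space admits a complementary subspace.

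Explicitly, I would first pick a basis $b_1, \ldots, b_k$ of $C\,'$. Because $C$ is a finite-dimensional $\mathbb{F}_2$-vector space containing $C\,'$, I can extend this to a basis $b_1, \ldots, b_k, b_{k+1}, \ldots, b_m$ of $C$. Then I set $K = \langle b_{k+1}, \ldots, b_m\rangle$. By construction $K$ is a binary linear code (a subspace of $\mathbb{F}_2^n$); by linear independence $C\,' \cap K = \{\vec{0}\}$; and by spanning every $\vec{x} \in C$ writes uniquely as $\vec{x}_1 + \vec{x}_2$ with $\vec{x}_1 \in C\,'$ and $\vec{x}_2 \in K$. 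This yields $C = C\,' \oplus K$, as required.

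There is no real obstacle here: the argument is one paragraph of elementary linear algebra, and the lemma is stated at this point essentially to fix notation for the complement $K$ that will be used in the main theorem. The choice of $K$ is of course not canonical (different basis extensions give different complements), but only the existence is asserted, and that is all that the subsequent sections will need.
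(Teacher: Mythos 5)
Your proof is correct and is essentially the paper's own argument: the paper simply cites Corollary~\ref{cor:cplus} and ``basic linear algebra'' for the existence of a vector space complement of $C\,'$ in $C$, and your basis-extension argument is exactly the standard fact being invoked. Nothing is missing.
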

\begin{proof}
From Corollary~\ref{cor:cplus} and basic linear algebra, there exists a vector subspace $K$ of $\ZZ_2^n$ that is a vector space complement of
$C\,'$ in $C$.
\end{proof}
Note that $K$ is not necessarily uniquely defined.  It is, however, uniquely defined up to adding vectors in $C\,'$.  So a more invariant approach would be to use $C/C\,'$ instead of $K$, but $K$ has the advantage of being a code, therefore more concrete for computational purposes.

The interpretation of $K$ can be obtained by examining the set $V^0=A_L^0\cap A_R^0$.  Since $A_L^0$ has only left-moving edges and $A_R^0$ has only right-moving edges, $V^0$ has no edges at all: only vertices.  Furthermore, for every $v\in V^0$, we have $h_L(v)=h_L(\overline{0})$ and $h_R(v)=h_R(\overline{0})$ so all of the vertices in $V^0$ have the same bigrading.

We now show that there is a bijection between $K$ and $V^0$. As before, for every $\vec{x}\in\ZZ_2^n$, we write $\vec{x}=\vec{x}_L+\vec{x}_R$, where $\vec{x}_L$ is zero in the last $q$ bits and $\vec{x}_R$ is zero in the first $p$ bits.  Using this notation, we have the following theorem:

\begin{thm}
\label{thm:kv0}
The map $\Psi:K\to V^0$
given by $\Psi(\vec{x})=\vec{x}_L\overline{0}$ is a bijection.
\end{thm}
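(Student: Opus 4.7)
The plan is to verify well-definedness, injectivity, and surjectivity separately, drawing throughout on the preceding proposition ($Z_L(C(A_L^0)) = C \cap Z_L(\ZZ_2^p)$ and its right-moving analog) and the direct sum $C = C\,' \oplus K$. For well-definedness, I would check that $\vec{x}_L \overline{0}$ lies in both $A_L^0$ and $A_R^0$. Membership in $A_L^0$ is immediate from the definition of the group action, since only left-moving generators are applied. For the $A_R^0$ side, I would use that $\vec{x} \in C$ stabilizes $\overline{0}$, and that the generators are involutions, to conclude $\vec{x}_L \overline{0} = \vec{x}_R \overline{0}$; the right-hand side visibly sits in $A_R^0$.

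For injectivity, assuming $\Psi(\vec{x}) = \Psi(\vec{y})$ for $\vec{x}, \vec{y} \in K$, I would show $\vec{x} + \vec{y} \in C\,' \cap K = \{0\}$. The key move is to apply the preceding proposition twice: first to $\vec{x}_L + \vec{y}_L$, which lies in $C$ by hypothesis (stabilizing $\overline{0}$) and in $Z_L(\ZZ_2^p)$ by construction, placing it in $Z_L(C(A_L^0)) \subset C\,'$; and then to $\vec{x}_R + \vec{y}_R$, which lies in $C$ by subtracting the previous piece from $\vec{x} + \vec{y} \in C$, and therefore in $Z_R(C(A_R^0)) \subset C\,'$. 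Adding the two pieces back together gives $\vec{x} + \vec{y} \in C\,'$.

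For surjectivity, given $v \in V^0$, I would pick a left-moving path in $A_L^0$ and a right-moving path in $A_R^0$ from $\overline{0}$ to $v$, producing $\vec{a} \in Z_L(\ZZ_2^p)$ and $\vec{b} \in Z_R(\ZZ_2^q)$ with $\vec{a}\,\overline{0} = \vec{b}\,\overline{0} = v$. Using involutivity, $(\vec{a} + \vec{b})\overline{0} = \overline{0}$, so $\vec{a} + \vec{b} \in C$. Decompose $\vec{a} + \vec{b} = \vec{c} + \vec{k}$ with $\vec{c} \in C\,'$ and $\vec{k} \in K$, and project to the first $p$ coordinates to obtain $\vec{k}_L = \vec{a} + \vec{c}_L$. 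Since $\vec{c}_L \in Z_L(C(A_L^0)) \subset C$ stabilizes $\overline{0}$, this gives $\Psi(\vec{k}) = \vec{k}_L\,\overline{0} = \vec{a}\,\overline{0} = v$.

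The main obstacle is the two-sided bookkeeping inside injectivity: showing that \emph{both} $\vec{x}_L + \vec{y}_L$ and $\vec{x}_R + \vec{y}_R$ land in $C\,'$ rather than merely in $C$. This is precisely where the left/right-symmetric content of the preceding proposition is essential, as it upgrades ``in $C$ and supported on one side'' to ``in $C\,'$.'' Everything else reduces to routine manipulation of the abelian, involutive action of $\ZZ_2^n$ on $V$, plus the fact that paths starting and ending at $\overline{0}$ correspond to codewords.
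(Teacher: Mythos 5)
Your proposal is correct and follows essentially the same route as the paper's proof: well-definedness via $\vec{x}_L\overline{0}=\vec{x}_R\overline{0}$ from $\vec{x}\in K\subset C$, injectivity by pushing $\vec{x}+\vec{y}$ into $C\,'\cap K=\{\vec{0}\}$ using the left/right halves separately, and surjectivity by decomposing $\vec{x}_L+\vec{x}_R=\vec{c}+\vec{k}$ and cancelling $\vec{c}_L$ against $\overline{0}$. The only (harmless) difference is a matter of bookkeeping: you place $\vec{x}_R+\vec{y}_R$ in $C$ by subtracting the left piece from $\vec{x}+\vec{y}$, whereas the paper gets it directly from $\vec{x}_R\overline{0}=\vec{y}_R\overline{0}$.
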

\begin{proof}
If $\vec{x}\in K\subset C(A)$, then $(\vec{x}_L+\vec{x}_R)\overline{0}=\overline{0}$, so $\vec{x}_L\overline{0}=\vec{x}_R\overline{0}$.  So $\Psi(\vec{x})\in A_L^0\cap A_R^0=V^0$.

To prove $\Psi$ is one-to-one, suppose $\Psi(\vec{x})=\Psi(\vec{y})$.  Then $\vec{x}_L\overline{0}=\vec{y}_L\overline{0}$.  Therefore $\vec{x}_L+\vec{y}_L\in C(A_L^0)$.  Likewise $\vec{x}_R+\vec{y}_R\in C(A_R^0)$.  So $\vec{x}+\vec{y}\in C\,'$.  Since $C\,'\cap K=\{\vec{0}\}$, we have that $\Psi$ is one-to-one.

To prove $\Psi$ is onto, let $v\in V^0$.  Since $v\in A_L^0$, there exists $\vec{x}_L$ with last $q$ bits zero, so that $\vec{x}_L\overline{0}=v$ in $A_L^0$.  Likewise there exists $\vec{x}_R$ with first $p$ bits zero, so that $\vec{x}_R\overline{0}=v$ in $A_R^0$.  Then $\vec{x}=\vec{x}_L+\vec{x}_R\in C(A)$.  Since $C(A)=C\,'\oplus K$, we can write $\vec{x}=\vec{c}+\vec{k}$ where $\vec{c}\in C\,'$ and $\vec{k}\in K$.  By the fact that $C\,'=Z_L(C(A_L^0))\oplus Z_R(C(A_R^0))$, we have that $\vec{c}_L\in Z_L(C(A_L^0))$ and so $\vec{c}_L\overline{0}=\overline{0}$.  Then
$\Psi(\vec{k})=\vec{k}_L\overline{0}
=(\vec{x}_L-\vec{c}_L)\overline{0}
=\vec{x}_L(\vec{c}_L(\overline{0}))
=\vec{x}_L(\overline{0})
=v.$
\end{proof}

\begin{ex}
Let $p=4$ and $q=2$ and the generating matrix for $C$ be
\[\left[\begin{array}{cccc|cc}
1&1&1&1&0&0\\
0&0&1&1&1&1
\end{array}\right].\]
Then $Z_L(C(A_L^0))$ has generating vector $\vec{m} = \left[\begin{array}{cccc|cc}
1&1&1&1&0&0\\
\end{array}\right]$ and $Z_R(C(A_R^0))=\{\vec{0}\}$, the trivial code.  We therefore see that $A_L^0$ is a $1$-d Adinkra with four colors with code with generating vector
$\left[\begin{array}{cccc}
1&1&1&1\\
\end{array}\right]$ and $A_R^0$ is a $1$-d Adinkra with two colors with trivial code.  The code $K$ can be chosen to be generated by $\vec{k} = \left[\begin{array}{cccc|cc}
0&0&1&1&1&1
\end{array}\right]$. Another choice for $K$ would have been the code generated by 
\[\vec{k} + \vec{m} = \left[\begin{array}{cccc|cc}
1&1&0&0&1&1
\end{array}\right].\]

See Figure~\ref{fig:example-quotient} for this example. To use Theorem~\ref{thm:kv0}, we start at $\overline{0}$ and follow an edge of color $3$, then an edge of color $4$, which uses the left-moving edges in $001111$.  This brings us to a new vertex, which is in $V^0$.  This vertex, and $\overline{0}$ itself, are the two elements of $V^0$, corresponding to the two elements of $K$. 
\end{ex}

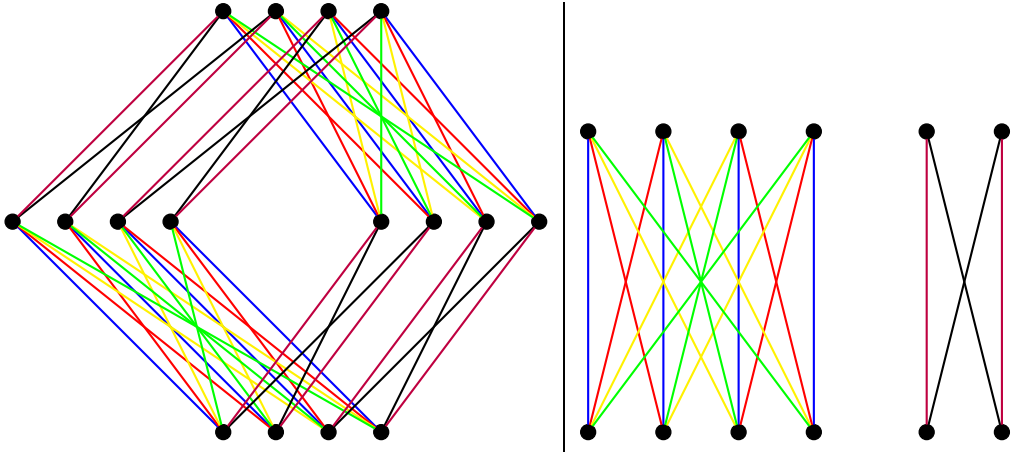
\begin{figure}
\begin{center}
\begin{tabular}{c|c}
\begin{tikzpicture}[scale=0.07]
\SetVertexSimple[MinSize=5pt]
\SetUpEdge[labelstyle={draw}]
\Vertex[x=0,y=0]{000000}
\Vertex[x=-40,y=40]{000100}
\Vertex[x=-30,y=40]{001000}
\Vertex[x=10,y=0]{001100}
\Vertex[x=-20,y=40]{010000}
\Vertex[x=20,y=0]{010100}
\Vertex[x=30,y=0]{011000}
\Vertex[x=-10,y=40]{011100}
\Vertex[x=30,y=40]{000001}
\Vertex[x=0,y=80]{000101}
\Vertex[x=10,y=80]{001001}
\Vertex[x=40,y=40]{001101}
\Vertex[x=20,y=80]{010001}
\Vertex[x=50,y=40]{010101}
\Vertex[x=60,y=40]{011001}
\Vertex[x=30,y=80]{011101}

\Edge[color=blue](000000)(000100)
\Edge[color=blue](001000)(001100)
\Edge[color=blue](010000)(010100)
\Edge[color=blue](011000)(011100)
\Edge[color=blue](000001)(000101)
\Edge[color=blue](001001)(001101)
\Edge[color=blue](010001)(010101)
\Edge[color=blue](011001)(011101)

\Edge[color=red](000000)(001000)
\Edge[color=red](000100)(001100)
\Edge[color=red](010000)(011000)
\Edge[color=red](010100)(011100)
\Edge[color=red](000001)(001001)
\Edge[color=red](000101)(001101)
\Edge[color=red](010001)(011001)
\Edge[color=red](010101)(011101)

\Edge[color=yellow](000000)(010000)
\Edge[color=yellow](011100)(001100)
\Edge[color=yellow](000100)(010100)
\Edge[color=yellow](011000)(001000)
\Edge[color=yellow](000001)(010001)
\Edge[color=yellow](001001)(011001)
\Edge[color=yellow](000101)(010101)
\Edge[color=yellow](001101)(011101)

\Edge[color=green](000000)(011100)
\Edge[color=green](001000)(010100)
\Edge[color=green](010000)(001100)
\Edge[color=green](011000)(000100)
\Edge[color=green](000001)(011101)
\Edge[color=green](001001)(010101)
\Edge[color=green](010001)(001101)
\Edge[color=green](011001)(000101)

\Edge[color=black](000000)(001101)
\Edge[color=black](000100)(001001)
\Edge[color=black](010000)(011101)
\Edge[color=black](010100)(011001)
\Edge[color=black](000001)(001100)
\Edge[color=black](000101)(001000)
\Edge[color=black](010001)(011100)
\Edge[color=black](010101)(011000)

\Edge[color=purple](000000)(000001)
\Edge[color=purple](000100)(000101)
\Edge[color=purple](010000)(010001)
\Edge[color=purple](010100)(010101)
\Edge[color=purple](001000)(001001)
\Edge[color=purple](001100)(001101)
\Edge[color=purple](011000)(011001)
\Edge[color=purple](011100)(011101)
\end{tikzpicture} &
\begin{tikzpicture}[scale=0.1]
\SetVertexSimple[MinSize=5pt]
\SetUpEdge[labelstyle={draw}]
\Vertex[x=0,y=0]{000000}
\Vertex[x=0,y=40]{000100}
\Vertex[x=10,y=40]{001000}
\Vertex[x=10,y=0]{001100}
\Vertex[x=20,y=40]{010000}
\Vertex[x=20,y=0]{010100}
\Vertex[x=30,y=0]{011000}
\Vertex[x=30,y=40]{011100}

\Edge[color=blue](000000)(000100)
\Edge[color=blue](001000)(001100)
\Edge[color=blue](010000)(010100)
\Edge[color=blue](011000)(011100)
\Edge[color=red](000000)(001000)
\Edge[color=red](000100)(001100)
\Edge[color=red](010000)(011000)
\Edge[color=red](010100)(011100)
\Edge[color=yellow](000000)(010000)
\Edge[color=yellow](011100)(001100)
\Edge[color=yellow](000100)(010100)
\Edge[color=yellow](011000)(001000)
\Edge[color=green](000000)(011100)
\Edge[color=green](001000)(010100)
\Edge[color=green](010000)(001100)
\Edge[color=green](011000)(000100)

\Vertex[x=45,y=0]{000000'}
\Vertex[x=55,y=0]{001100'}
\Vertex[x=45,y=40]{000001'}
\Vertex[x=55,y=40]{001101'}
\Edge[color=black](000000')(001101')
\Edge[color=black](000001')(001100')
\Edge[color=purple](000000')(000001')
\Edge[color=purple](001100')(001101')
\end{tikzpicture}
\end{tabular}
\end{center}
\caption{Left: a $2$-d Adinkra with dashes omitted. Right: picking any vertex and looking at connected components give us a pair of $1$-d Adinkras $A_L^0$ and $A_R^0$, the product of which contains as a quotient the original Adinkra; in this case we quotient via a $1$-dimensional code $K$ of size $2^1 = 2$, obtaining the desired $8 \times 4 / 2 = 16$ vertices.  \label{fig:example-quotient}}
\end{figure}

\begin{ex}
Let $p=q=4$ and consider the following generating matrix for $C$:
\[\left[\begin{array}{cccc|cccc}
1&1&0&0&1&1&0&0\\
1&1&1&1&1&1&1&1\\
0&0&1&1&1&1&0&0\\
1&0&1&0&1&0&1&0
\end{array}\right]\]
If we let $\vec{x}_1, \ldots, \vec{x}_4$ be the rows of this matrix, we see that $\vec{x}_1+\vec{x}_3$ has zeros on the right side of the vertical line.  Other than the zero word, no other combination has all zeros on the right side, so $Z_L(C(A_L^0))$ has generating matrix/vector $\left[\begin{array}{cccc|cccc}
1&1&1&1&0&0&0&0
\end{array}\right]$ and $C(A_L^0)$ has generating matrix/vector $\left[\begin{array}{cccc}
1&1&1&1
\end{array}\right].$

Likewise we can find $\vec{x}_1+\vec{x}_2+\vec{x}_3$ which is the unique nonzero codeword with all zeros on the left side, and so $Z_R(C(A_R^0))$ has generating matrix/vector
\[\left[\begin{array}{cccc|cccc}
0&0&0&0&1&1&1&1
\end{array}\right]\]
and $C(A_R^0)$ has generating matrix $\left[\begin{array}{cccc}
1&1&1&1
\end{array}\right].$ Then $A_L^0$ and $A_R^0$ are both $1$-d Adinkras with $4$ colors with code generated by $1111$, and $K$ can be taken to be (for instance)
\[\left[\begin{array}{cccc|cccc}
0&0&1&1&1&1&0&0\\
1&0&1&0&1&0&1&0
\end{array}\right].\]
Standard arguments in linear algebra allow us to choose the generating basis for $C$ to consist of a generating basis for $Z_L(C(A_L^0))$, then a generating basis for $Z_R(C(A_R^0))$, then a generating basis for $K$.  In this case, we would write
\[\left[\begin{array}{cccc|cccc}
1&1&1&1&0&0&0&0\\\hline
0&0&0&0&1&1&1&1\\\hline
0&0&1&1&1&1&0&0\\
1&0&1&0&1&0&1&0
\end{array}\right]\]
where the horizontal lines separate the three subspaces.  Each line has weight a multiple of $4$, where the first two lines have the $1$s all on one side or the other, while the last two lines (the ones responsible for $K$) have the $1$s split on both sides in a way that both sides have even weight.

Then $V^0$ has four elements: $\overline{0}$, $(00110000)\overline{0}$, $(10100000)\overline{0}$, and $(10010000)\overline{0}$.
\end{ex}

While not necessary for proving the main theorem of this paper, \cite{hubsch:weaving} also asked how to classify ESDE codes. It turns out that the answer is fairly concise. To do this, it is useful to extend the notion of the splitting of $n$ into $p+q$.  In particular, instead of insisting that the left-moving colors be written as the first $p$ bits, we partition $[n]$ into $[n]=L\cup R$ such that $|L|=p$ and $|R|=q$.  We fix $n$ and a doubly even code $C$, then characterize which partitions into $L$ and $R$ make $C$ an ESDE.

\begin{thm}
\label{thm:esdeclassify}
Given a doubly even code $C$ of length $n$, there is a bijection between codewords in $C^\perp$ and (ordered) partitions $L \cup R$ that make $C$ into an ESDE. There are $2^{n-k}$ such partitions. 
\end{thm}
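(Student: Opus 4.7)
The plan is to translate the ESDE condition into a linear-algebraic one involving the dual code $C^\perp$, after which the bijection will be essentially tautological.

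First I would record a simple observation that halves the work: since $C$ is doubly even, for each $\vec{x}\in C$ the total weight $\wt(\vec{x})=\wt_L(\vec{x})+\wt_R(\vec{x})$ is divisible by $4$ and in particular even. Hence $\wt_L(\vec{x})$ is even if and only if $\wt_R(\vec{x})$ is even, so the ESDE requirement reduces to the single condition that $\wt_L(\vec{x})$ is even for every $\vec{x}\in C$.

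Next, I would rewrite this condition over $\ZZ_2$. For an ordered partition $[n]=L\cup R$, let $\vec{1}_L\in\ZZ_2^n$ denote the indicator vector of $L$. Then $\wt_L(\vec{x})\equiv \vec{1}_L\cdot \vec{x}\pmod{2}$, where $\cdot$ is the standard inner product. So the ESDE condition for the partition $(L,R)$ is exactly that $\vec{1}_L\cdot \vec{x}=0$ in $\ZZ_2$ for all $\vec{x}\in C$, that is, $\vec{1}_L\in C^\perp$.

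Finally, the assignment $(L,R)\mapsto \vec{1}_L$ is an obvious bijection between ordered partitions of $[n]$ and vectors in $\ZZ_2^n$ (the inverse sends $\vec{v}\in\ZZ_2^n$ to the partition into its support and its complement). By the previous step, this restricts to a bijection between ESDE partitions of $C$ and codewords of $C^\perp$. The count then follows immediately: if $\dim C=k$, then $|C^\perp|=2^{n-k}$. There is no real obstacle here; the only subtlety worth flagging is that one must use the doubly-even hypothesis (and not merely evenness of $C$) to collapse the two parity conditions into one, and that the partitions are \emph{ordered} so we do not quotient by swapping $L$ and $R$.
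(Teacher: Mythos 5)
Your proof is correct and follows essentially the same route as the paper's: identify each ordered partition with the indicator vector of $L$ and observe that the ESDE condition is precisely orthogonality of that vector to every codeword of $C$, so the partitions correspond exactly to elements of $C^\perp$. (One tiny remark: collapsing the two parity conditions into one only requires $C$ to be even, which the doubly-even hypothesis of course implies, so your parenthetical caveat is slightly stronger than needed.)
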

\begin{proof}
Consider a partition $L \cup R$ that makes $C$ into an ESDE, and consider the codeword $w$ that is defined to have $1$ at all the positions in $L$ and $0$ at all the positions in $R$.  By definition of ESDE codes, all codewords in $C$ have an even number of $1$'s in the support of $w$, which is equivalent to saying that $w$ is orthogonal to all the codewords in $C$. Thus, $w \in C^\perp$.  Conversely, for any $w \in C^\perp$, $w$ is orthogonal to all codewords in $C$ and thus give an ESDE. Thus, there is a bijection between the two sets.  Note these are ordered partitions; the codeword which has $1$ at all the positions in $R$ and $0$ otherwise would give the same partition, but in reversed order.
\end{proof}

\section{Proof of Main Theorem}
\label{sec:quotient}
In this section we prove the main theorem of the paper, Theorem~\ref{thm:main}.  This refers to a connected $2$-d Adinkra $A$ with $(p,q)$ colors.  As in Section~\ref{sec:code2d} we pick a vertex $\overline{0}$ in $A$ and define $A_L^0$ and $A_R^0$.  We use Construction~\ref{const:product} to define $A'=A_L^0\times A_R^0$ and let $C=C(A)$ and $C\,'=C(A')=Z_L(C(A_L^0))\oplus Z_R(C(A_R^0))$.  Let $K$ be a code given by Lemma~\ref{lem:existk}.

We now try to prove the following (slight) restatement of Theorem~\ref{thm:main}:
\begin{thm}
\label{thm:quotient}
Let $A$ be a connected $2$-d Adinkra.  Then there is a vertex switching $F$ and an action of $K$ on $F(A')$ that preserves colors, dashing, and bigrading, and so that
\[F(A')/K\cong A\]
as an isomorphism of $2$-d Adinkras (that is, as an isomorphism of graphs that preserves colors, dashing, and bigrading).
\end{thm}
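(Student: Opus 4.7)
The plan is to construct a natural action of $K$ on $A'$ whose quotient recovers $A$, then align the dashings by a vertex switching.  The argument proceeds in three stages: define the $K$-action on $A'$ and show it preserves colors and bigrading; identify $A'/K$ with $A$ as a colored bigraded graph; and choose $F$ so that the dashings agree on the quotient.

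For $k \in K$, write $k = k_L + k_R$ with $k_L$ zero on the last $q$ coordinates and $k_R$ zero on the first $p$.  Let $k$ act on $A' = A_L^0 \times A_R^0$ by $(v_1, v_2) \mapsto (k_L v_1, k_R v_2)$, using the natural $\ZZ_2^n$-actions on each factor.  This action is free: if $(k_L v_1, k_R v_2) = (v_1, v_2)$ then $k_L \in C(A_L^0)$ and $k_R \in C(A_R^0)$, so $k \in C\,' \cap K = \{\vec{0}\}$ by Lemma~\ref{lem:existk}.  It preserves colors by construction.  The crucial point is bigrading-preservation: since $k \in K \subset C(A)$ fixes every vertex of $A$, in $A$ we have $k_L \cdot w = k_R \cdot w$ for every $w$.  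Because $k_R$ uses only right-moving edges it preserves $h_L$, giving $h_L(k_L w) = h_L(k_R w) = h_L(w)$, and symmetrically $h_R(k_R w) = h_R(k_L w) = h_R(w)$.  Applied with $w$ ranging over vertices of $A_L^0$ and $A_R^0$, this shows the $K$-action on $A'$ preserves the bigrading.

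Next, define $\phi \colon A' \to A$ by $\phi(g_L \overline{0}, g_R \overline{0}) = (Z_L(g_L) + Z_R(g_R)) \cdot \overline{0}$.  This is well-defined because $C\,' \subset C$, is $\ZZ_2^n$-equivariant, and surjective by connectedness of $A$.  Two points of $A'$ have the same image under $\phi$ iff their difference in $\ZZ_2^n$ lies in $C = C\,' \oplus K$; since $C\,'$ already acts trivially on $A'$, this is the same as lying in the same $K$-orbit.  Therefore $\phi$ induces a color-, edge-, and bigrading-preserving bijection $\bar\phi \colon A'/K \to A$.  Pulling $\mu_A$ back along $\bar\phi$ yields a $K$-invariant dashing $\tilde\mu$ on $A'$ that is admissible, since $\bar\phi$ sends every two-colored $4$-cycle of $A'$ to a two-colored $4$-cycle of $A$ without collapsing vertices (any collapse would force some $k \in K$ to have odd left or right weight, contradicting the ESDE condition).

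Finally, both $\tilde\mu$ and the product dashing $\mu_{A'}$ from Construction~\ref{const:product} are admissible dashings on the same colored graph $A'$, so by the vertex switching theory of \cite{dil:cohomology, zhang:adinkras} they differ by a composition of vertex switchings $F$.  Then $F(A')$ carries the dashing $\tilde\mu$, which is $K$-invariant and quotients to $\mu_A$, giving $F(A')/K \cong A$ as $2$-d Adinkras.  I expect this final step to be the main obstacle, since admissible dashings on a given colored graph can in principle split into several vertex switching classes.  Concretely, one needs to check that the edge labeling $\tilde\mu - \mu_{A'}$, which is even on every two-colored $4$-cycle of $A'$, is a coboundary: it equals $f(v) + f(w)$ on each edge $(v,w)$ for some $f \colon V(A') \to \ZZ_2$, and then $F$ switches at exactly the vertices where $f = 1$.
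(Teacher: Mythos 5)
Your first two stages are sound and essentially reproduce the paper's Theorems~\ref{thm:isocolors} and~\ref{thm:isograding} by more explicit means: the action $(v_1,v_2)\mapsto(k_Lv_1,k_Rv_2)$, freeness via $C\,'\cap K=\{\vec{0}\}$, the identification of the fibers of $\phi$ with $K$-orbits, and the ``right-moving edges do not change $h_L$'' argument for the bigrading are all correct. (You should run that last argument for $\phi$ itself as well, not only for the $K$-action, but the identical computation works; this is the paper's Lemma~\ref{lem:mainepibigrading}.)

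The gap is exactly where you suspected. The assertion that $\tilde\mu$ and $\mu_{A'}$ differ by a vertex switching merely because both are admissible is false in general: vertex switching preserves the parity of a dashing on \emph{every} cycle, not just on two-colored $4$-cycles, and when $C\,'$ is nontrivial $A'$ has cycles (loops realizing codewords of $C\,'$) whose parity is not determined by the $4$-cycles, so admissible dashings split into several switching classes --- this is precisely what is counted in \cite{zhang:adinkras}, and it is why the paper needs Lemma~\ref{lem:switch12} at all. What you must verify is your own final sentence: that $\tilde\mu-\mu_{A'}$ has even parity on every cycle of $A'$ (equivalently, is a coboundary), and that verification is the real content of the theorem. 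The paper does it in two steps. First, Lemma~\ref{lem:agree-on-boundary} shows the two dashings literally agree on the axes $A_L^0\times\{\overline{0}\}$ and $\{\overline{0}\}\times A_R^0$, using Lemma~\ref{lem:phisides} (that $\Phi$ restricted to each axis is the identity). Second, Lemma~\ref{lem:switch12} takes an arbitrary cycle based at $(\overline{0},\overline{0})$, records its color sequence, and bubble-sorts the left-moving colors to the front by adjacent transpositions; each transposition replaces one vertex by the opposite corner of a two-colored $4$-cycle and therefore flips the parity of \emph{both} dashings by $1$, so the difference of their parities is invariant, and after sorting the cycle lies entirely on the two axes where the dashings agree. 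Without an argument of this kind the existence of $F$ --- and hence the theorem --- is not established.
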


We prove Theorem~\ref{thm:quotient} in two steps, by first constructing a (color and bigrading preserving) graph isomorphism $\tilde{\Phi}:A'/K\to A$ and then finding a suitable vertex switching $F$.

\begin{thm}
\label{thm:isocolors}
The code $K$ acts on $A'$ via color preserving isomorphisms to produce a quotient $A'/K$.  There is a color preserving graph epimorphism $\Phi:A' \to A$ that sends $(\overline{0},\overline{0})$ to $\overline{0}$.  This descends to $A'/K$ to produce a color preserving graph isomorphism $\tilde{\Phi}:A'/K\to A$.
\end{thm}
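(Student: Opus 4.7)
The plan is to exploit the canonical splitting $\ZZ_2^n = Z_L(\ZZ_2^p) \oplus Z_R(\ZZ_2^q)$, writing each $\vec{k} \in K$ as $\vec{k} = \vec{k}_L + \vec{k}_R$ with $\vec{k}_L$ supported in the first $p$ bits. Since $V(A_L^0)$ is precisely the $Z_L(\ZZ_2^p)$-orbit of $\overline{0}$ inside $A$ and $\vec{k}_L \in Z_L(\ZZ_2^p)$, the rule
\[\vec{k}\cdot(v_1, v_2) \;=\; (\vec{k}_L\, v_1,\; \vec{k}_R\, v_2)\]
preserves $V(A') = V(A_L^0) \times V(A_R^0)$ and defines a $K$-action on $V(A')$. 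This action is by color-preserving graph isomorphisms because the $\ZZ_2^p$-action on $A_L^0$ and the $\ZZ_2^q$-action on $A_R^0$ both are, and edges of $A'$ are constructed separately from the two factors in Construction~\ref{const:product}.

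Next, I would define $\Phi: V(A') \to V(A)$ by choosing, for each $(v_1, v_2)$, any $\vec{x}_L \in Z_L(\ZZ_2^p)$ and $\vec{x}_R \in Z_R(\ZZ_2^q)$ with $\vec{x}_L \overline{0} = v_1$ in $A_L^0$ and $\vec{x}_R \overline{0} = v_2$ in $A_R^0$, then setting $\Phi(v_1, v_2) = (\vec{x}_L + \vec{x}_R)\overline{0}$ in $A$. Well-definedness uses the identity $Z_L(C(A_L^0)) = C(A) \cap Z_L(\ZZ_2^p)$ proved just before Corollary~\ref{cor:cplus}: any two valid choices of $\vec{x}_L$ differ by an element of $Z_L(C(A_L^0)) \subset C(A)$, which acts trivially on $\overline{0}$ in $A$, and similarly for $\vec{x}_R$. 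By construction $\Phi(\overline{0},\overline{0}) = \overline{0}$. Color-preservation is then routine: a color-$i$ edge out of $(v_1,v_2)$ in $A'$ corresponds to modifying $\vec{x}_L$ or $\vec{x}_R$ in a single bit, and the image in $A$ moves by the corresponding $q_i$. Surjectivity follows from connectedness of $A$: every vertex has the form $\vec{x}\overline{0} = \Phi(\vec{x}_L \overline{0}, \vec{x}_R \overline{0})$.

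Descent through the quotient is the easy half: for $\vec{k} \in K \subset C(A)$,
\[\Phi(\vec{k}\cdot(v_1,v_2)) = (\vec{x}_L + \vec{k}_L + \vec{x}_R + \vec{k}_R)\overline{0} = (\vec{x}_L + \vec{x}_R + \vec{k})\overline{0} = (\vec{x}_L + \vec{x}_R)\overline{0} = \Phi(v_1,v_2),\]
so $\tilde{\Phi}: A'/K \to A$ is a well-defined color-preserving graph epimorphism.

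The main obstacle, as I expect, is the injectivity of $\tilde{\Phi}$, and this is precisely where the direct-sum decomposition $C(A) = C\,' \oplus K$ earns its keep. Suppose $\Phi(v_1, v_2) = \Phi(w_1, w_2)$, with $v_1 = \vec{x}_L \overline{0}$, $v_2 = \vec{x}_R \overline{0}$, $w_1 = \vec{y}_L \overline{0}$, $w_2 = \vec{y}_R \overline{0}$. Then $(\vec{x}_L + \vec{y}_L) + (\vec{x}_R + \vec{y}_R) \in C(A)$, which is uniquely expressible as $\vec{c} + \vec{k}$ with $\vec{c} = \vec{c}_L + \vec{c}_R \in Z_L(C(A_L^0)) \oplus Z_R(C(A_R^0))$ and $\vec{k} = \vec{k}_L + \vec{k}_R \in K$. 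Matching the $Z_L$ and $Z_R$ components forces $\vec{y}_L = \vec{x}_L + \vec{c}_L + \vec{k}_L$ and $\vec{y}_R = \vec{x}_R + \vec{c}_R + \vec{k}_R$; since $\vec{c}_L$ stabilizes $\overline{0}$ in $A_L^0$ and $\vec{c}_R$ stabilizes $\overline{0}$ in $A_R^0$, this yields $w_1 = \vec{k}_L v_1$ and $w_2 = \vec{k}_R v_2$, so $(w_1, w_2) = \vec{k}\cdot(v_1,v_2)$ lies in the same $K$-orbit as $(v_1, v_2)$. With $\tilde{\Phi}$ thereby a color-preserving bijection on vertices between two $n$-colored graphs in which every vertex has exactly one edge of each color, each color-$i$ edge at $[v] \in A'/K$ maps to the unique color-$i$ edge at $\tilde{\Phi}([v]) \in A$, giving the required graph isomorphism.
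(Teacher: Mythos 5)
Your proof is correct, but it takes a genuinely different route from the paper's. The paper lifts everything to the Hamming cube: it invokes Theorem~\ref{thm:1d-quotients} to write $A\cong I^n/C$ and $A'\cong I^n/C\,'$ as colored graphs, observes that $K$ is doubly even with $K\cap C\,'=0$ (so nontrivial elements of $K$ move vertices a distance greater than $2$ and the quotient machinery of the cited references applies), and then reads off $\tilde{\Phi}$ from the identification $(I^n/C\,')/K\cong I^n/(C\,'\oplus K)=I^n/C$ via a commutative diagram. You instead construct $\Phi$ by hand from coset representatives, using the identity $Z_L(C(A_L^0))=C(A)\cap Z_L(\ZZ_2^p)$ for well-definedness and the direct sum $C=C\,'\oplus K$ to show that the fibers of $\Phi$ are exactly the $K$-orbits --- an argument that closely parallels the paper's own proof of Theorem~\ref{thm:kv0}. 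Your version is more self-contained and makes the isomorphism completely explicit; the paper's version gets the well-behavedness of the quotient $A'/K$ (no loops, no two differently colored edges between the same pair of orbits, exactly one edge of each color at each vertex) for free from the general theory, whereas you obtain it only a posteriori by transporting structure through the vertex bijection $\tilde{\Phi}$. That last step deserves a sentence in your write-up, though it is immediate: since $C$ is doubly even, no $\vec{e}_i$ or $\vec{e}_i+\vec{e}_j$ lies in $C=C\,'\oplus K$, so the quotient cannot create loops or color collisions.
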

\begin{proof}
Let $I^n$ be the colored Hamming cube: that is, a graph with vertex set $\{0,1\}^n$ and two vertices are connected with an edge of color $i$ if they differ only in bit $i$.  Recall from Theorem~\ref{thm:1d-quotients} that every connected Adinkra is, as a colored graph, the quotient of $I^n$ by the code for the Adinkra.  So we have $I^n/C \cong A$ and $I^n/C\,'\cong A'$. These are isomorphisms as colored graphs.  They can be chosen so that $\vec{0}=(0,\ldots,0)$ is sent to $\overline{0}$ in $A$ and $(\overline{0},\overline{0})$ in $A'$, respectively.

Now $K$ is a doubly even code, and $K\cap C\,'=0$, so $K$ acts on $I^n/C\,'$ and on $A'$ in a way that nontrivial elements of $K$ move vertices a distance greater than $2$. By the content of the proof of the extension of Theorem~\ref{thm:1d-quotients} in \cite{zhang:adinkras}, this means we can quotient the colored graph $I^n/C\,'$, and thus $A'$, by $K$.  We then have the following commutative diagram of colored graphs:

\[
\begin{CD}
I^n/C\,' @>i_1>\cong> A'\\
@VV\pi_1 V @VV\pi_2 V\\
(I^n/C\,')/K @>i_2>\cong> A'/K\\
\end{CD}  
\]

A standard argument gives
\[(I^n/C\,'\,)/K \cong I^n/(C\,'\oplus K)=I^n/C,\]
and adding this to the above commutative diagram, we then have:
\[
\xymatrixcolsep{5pc}
\xymatrix{
I^n/C\,' \ar[d]^{\pi_1} \ar[r]^{i_1}_{\cong} &A'\ar[d]^{\pi_2}\ar@/^2pc/[dd]^{\Phi} \\
(I^n/C\,')/K \ar[d]^{i_4}_{\cong} \ar[r]^{i_2}_\cong &A'/K\ar[d]_{\cong}^{\tilde{\Phi}}\\
I^n/(C\,'\oplus K) \ar[r]^{i_3}_\cong &A.}
\]
where $\tilde{\Phi}=i_3\circ i_4\circ i_2^{-1}$ and $\Phi=\tilde{\Phi}\circ \pi_2$.  Then $\tilde{\Phi}$ is an isomorphism of colored graphs, and $\Phi$ is an epimorphism of colored graphs.  Standard diagram chasing shows that $\Phi(\overline{0},\overline{0})=\overline{0}$.
\end{proof}

It will be useful to have the following result:
\begin{lem}
\label{lem:gphi}
If $\vec{x}\in \ZZ_2^n$, then $\Phi(\vec{x}(v_1,v_2))=\vec{x}\Phi(v_1,v_2).$
\end{lem}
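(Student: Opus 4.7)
The plan is to reduce to the case when $\vec{x}$ is a standard basis vector $e_i$ and then exploit the fact that $\Phi$ preserves colors together with the uniqueness of the color-$i$ neighbor at each vertex.

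First, I would record that for any Adinkra $B$ the $\ZZ_2^n$-action decomposes as $\vec{x}\, w = q_1^{x_1}\circ\cdots\circ q_n^{x_n}(w)$, where $q_i$ sends each vertex to its unique color-$i$ neighbor. In particular, $e_i\, w = q_i(w)$ and the $q_i$ commute. Consequently, if one can establish the equivariance $\Phi\circ q_i = q_i\circ \Phi$ for every $i\in[n]$ (where the $q_i$ on the left acts on $A'$ and the $q_i$ on the right acts on $A$), then for any $\vec{x}=(x_1,\ldots,x_n)\in \ZZ_2^n$,
\[
\Phi(\vec{x}(v_1,v_2)) = \Phi\bigl(q_1^{x_1}\circ\cdots\circ q_n^{x_n}(v_1,v_2)\bigr) = q_1^{x_1}\circ\cdots\circ q_n^{x_n}\bigl(\Phi(v_1,v_2)\bigr) = \vec{x}\,\Phi(v_1,v_2),
\]
which is exactly the claim. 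So the work reduces to the single-color case.

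For the single-color case, fix $i\in[n]$ and a vertex $(v_1,v_2)\in A'$. By definition, $q_i(v_1,v_2)$ is the unique vertex of $A'$ joined to $(v_1,v_2)$ by an edge of color $i$. Since $\Phi:A'\to A$ was constructed in Theorem~\ref{thm:isocolors} as a color-preserving graph epimorphism, the edge $\{(v_1,v_2), q_i(v_1,v_2)\}$ in $A'$ is sent to an edge $\{\Phi(v_1,v_2),\Phi(q_i(v_1,v_2))\}$ in $A$ of the same color $i$. But in $A$ (being a $1$-d Adinkra) there is exactly one color-$i$ edge incident to $\Phi(v_1,v_2)$, and the other endpoint is by definition $q_i(\Phi(v_1,v_2))$. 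Therefore $\Phi(q_i(v_1,v_2)) = q_i(\Phi(v_1,v_2))$, as needed.

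There is essentially no obstacle here, which is the point of the lemma: it is just the observation that any color-preserving map between $1$-d Adinkra graphs is automatically equivariant for the associated $\ZZ_2^n$-actions, and this descends unchanged to our $2$-d setting since the actions and the map $\Phi$ were built compatibly through the commutative diagram of Theorem~\ref{thm:isocolors}. The only thing worth being careful about is the direction of the argument: one uses the uniqueness of color-$i$ neighbors in $A$ (the target), not in $A'$, to conclude the identity.
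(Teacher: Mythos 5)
Your proposal is correct and follows essentially the same route as the paper's proof: reduce to the case $\vec{x}=\vec{e}_i$, where the claim is exactly the statement that $\Phi$ is color-preserving (via uniqueness of the color-$i$ neighbor in the target), and then compose. You simply spell out in more detail what the paper states in two sentences.
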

\begin{proof}
For $\vec{x}=\vec{e}_i$, the vector that is $1$ in component $i$ and $0$ otherwise, this lemma is the statement that $\Phi$ is color-preserving.  By composing many maps of this type, we get the statement for all vectors $\vec{x}\in \ZZ_2^n$.
\end{proof}

\begin{lem}
\label{lem:phisides}
For all $v\in A_L^0$, $\Phi(v,\overline{0})=v$, and for all $w\in A_R^0$, $\Phi(\overline{0},w)=w$.  In particular, $\Phi$ restricted to $A_L^0\times\{\overline{0}\}$ is an isomorphism onto its image, and likewise for $\Phi$ restricted to $\{\overline{0}\}\times A_R^0$.
\end{lem}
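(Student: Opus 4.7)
The plan is to reduce the lemma to Lemma~\ref{lem:gphi} by expressing any vertex of $A_L^0$ (resp.\ $A_R^0$) as a $\ZZ_2^n$-translate of $\overline{0}$ via a vector supported only on the left-moving (resp.\ right-moving) coordinates.

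First I would handle the vertex-level statement. Pick $v \in A_L^0$. By definition $A_L^0$ is the connected component of $\overline{0}$ in $A_L$, so there is a path from $\overline{0}$ to $v$ using only left-moving edges, which means there exists a vector $\vec{x}_L \in \ZZ_2^n$ supported in the first $p$ coordinates with $\vec{x}_L \overline{0} = v$ in $A$. In the product $A' = A_L^0 \times A_R^0$, the $\ZZ_2^n$-action decomposes on a pair as $\vec{x}(v_1,v_2) = (\vec{x}_L v_1, \vec{x}_R v_2)$, so $\vec{x}_L \cdot (\overline{0},\overline{0}) = (v, \overline{0})$. Applying Lemma~\ref{lem:gphi} together with $\Phi(\overline{0},\overline{0}) = \overline{0}$ gives
\[ \Phi(v,\overline{0}) = \Phi\bigl(\vec{x}_L(\overline{0},\overline{0})\bigr) = \vec{x}_L\,\Phi(\overline{0},\overline{0}) = \vec{x}_L \overline{0} = v.\]
The right-moving statement is symmetric: choose $\vec{x}_R$ supported in the last $q$ coordinates with $\vec{x}_R \overline{0} = w$ and repeat the computation.

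For the ``isomorphism onto its image'' claim, the formula $\Phi(v,\overline{0}) = v$ immediately makes the restriction $\Phi|_{A_L^0 \times \{\overline{0}\}}$ injective, and its image is exactly $A_L^0 \subset A$. Since $\Phi$ is a color-preserving graph epimorphism, the restriction sends edges to edges. For the reverse direction, an edge of $A$ between two vertices of $A_L^0$ must be left-moving (any right-moving edge incident to a vertex of $A_L^0$ leaves the left-moving component), and every left-moving edge of $A$ incident to a vertex in $A_L^0$ lies in $A_L^0$; the unique such edge of each color at $(v,\overline{0})$ in $A'$ is sent by $\Phi$ to the corresponding edge at $v$ in $A$ (using that $\Phi$ is color-preserving and each vertex has one edge per color). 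The same argument works for $\{\overline{0}\} \times A_R^0$.

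The main conceptual content here is simply Lemma~\ref{lem:gphi} combined with the transitivity of the left-moving (resp.\ right-moving) action on $A_L^0$ (resp.\ $A_R^0$); no step should be an obstacle, the only thing to be careful about is writing down the decomposition $\vec{x}(v_1,v_2) = (\vec{x}_L v_1, \vec{x}_R v_2)$ correctly so that the $\vec{x}_L$-translate fixes the second coordinate.
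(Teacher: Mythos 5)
Your proposal is correct and follows essentially the same route as the paper: express $v$ as $Z_L(\vec{x})\overline{0}$ for a left-supported vector and apply Lemma~\ref{lem:gphi} together with $\Phi(\overline{0},\overline{0})=\overline{0}$. The extra detail you supply for the ``isomorphism onto its image'' clause is more than the paper writes (it treats that as immediate), but it is consistent with the intended argument.
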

\begin{proof}
Let $\vec{x}\in \ZZ_2^p$ be such that $\vec{x}\overline{0}=v$.  By Lemma~\ref{lem:gphi}, $\Phi(v,0) = \Phi(Z_L(\vec{x})(\overline{0},\overline{0})) = Z_L(\vec{x})\Phi(\overline{0},\overline{0})=Z_L(\vec{x})\overline{0}=v$.  The proof for $w$ is similar.
\end{proof}

\begin{lem}
\label{lem:phiformula}
Let $(v,w)\in A'$, with $w=\vec{x}\overline{0}$.  Then $\Phi(v,w)=Z_R(\vec{x})v.$
\end{lem}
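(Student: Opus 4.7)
The proof should follow by combining the two previous lemmas with the product structure of $A'$, and the whole argument is essentially a one-line commutative-diagram chase dressed up to emphasize where each ingredient enters.

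My plan is as follows. First I would note that in the product Adinkra $A' = A_L^0 \times A_R^0$, the $\ZZ_2^n$ action splits: a vector of the form $Z_R(\vec{x})$, having zeros in its first $p$ coordinates, acts trivially on the left factor and as $\vec{x}$ on the right factor. Consequently, starting from $(v,\overline{0})$ and applying $Z_R(\vec{x})$ gives exactly $(v,\vec{x}\overline{0}) = (v,w)$. In other words, $(v,w) = Z_R(\vec{x})\,(v,\overline{0})$ inside $A'$.

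Next, I would apply $\Phi$ to both sides and invoke Lemma~\ref{lem:gphi} to pull $Z_R(\vec{x})$ across $\Phi$:
\[
\Phi(v,w) \;=\; \Phi\!\bigl(Z_R(\vec{x})\,(v,\overline{0})\bigr) \;=\; Z_R(\vec{x})\,\Phi(v,\overline{0}).
\]
Finally, Lemma~\ref{lem:phisides} immediately gives $\Phi(v,\overline{0}) = v$, so the right-hand side equals $Z_R(\vec{x})\,v$, as required.

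The only real subtlety is justifying the first step: that the action of the element $Z_R(\vec{x}) \in \ZZ_2^n$ on $A'$ really is componentwise on the right factor. This follows directly from Construction~\ref{const:product}, since the right-moving colors in $A_1\times A_2$ correspond precisely to the edges inherited from $E_2$, which in the coding convention from Proposition~\ref{prop:prodcode} sit in the last $q$ bits. Once this is observed, no genuine obstacle remains, and the lemma's proof is a three-line calculation.
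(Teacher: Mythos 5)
Your proposal is correct and is essentially identical to the paper's own proof, which likewise starts from $\Phi(v,\overline{0})=v$ (Lemma~\ref{lem:phisides}), acts with $Z_R(\vec{x})$ on both sides, and invokes Lemma~\ref{lem:gphi} together with the componentwise action on the product. The extra care you take in justifying that $Z_R(\vec{x})$ acts only on the right factor is a reasonable elaboration of a step the paper leaves implicit, but it is the same argument.
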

\begin{proof}
We have $\Phi(v,\overline{0})=v$ from Lemma~\ref{lem:phisides}.  Act on both sides with $Z_R(\vec{x})$, and using Lemma~\ref{lem:gphi}, the result follows.
\end{proof}

\begin{lem}
\label{lem:mainepibigrading}
The graph epimorphism $\Phi$ preserves the bigrading.
\end{lem}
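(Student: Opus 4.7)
The plan is to verify preservation of the two bigrading coordinates $h_L^A$ and $h_R^A$ independently by writing $\Phi(v,w)$ in two different, but equivalent, forms that isolate each coordinate. By Construction~\ref{const:product}, and taking the natural gradings $h^{A_L^0} = h_L^A|_{A_L^0}$ and $h^{A_R^0} = h_R^A|_{A_R^0}$ inherited from $A$, the bigrading on $A' = A_L^0\times A_R^0$ reads $h_L'(v,w) = h_L^A(v)$ and $h_R'(v,w) = h_R^A(w)$, so the target reduces to showing $h_L^A(\Phi(v,w)) = h_L^A(v)$ and $h_R^A(\Phi(v,w)) = h_R^A(w)$.

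First I set up the symmetric representations. For $(v,w)\in A'$, pick $\vec{y}\in\ZZ_2^p$ with $\vec{y}\overline{0}=v$ in $A_L^0$ and $\vec{x}\in\ZZ_2^q$ with $\vec{x}\overline{0}=w$ in $A_R^0$. Lemma~\ref{lem:phiformula} directly gives $\Phi(v,w) = Z_R(\vec{x})v$. Applying Lemma~\ref{lem:gphi} to $Z_L(\vec{y})$ together with the second half of Lemma~\ref{lem:phisides} yields the mirror formula $\Phi(v,w) = Z_L(\vec{y})\Phi(\overline{0},w) = Z_L(\vec{y})w$ (equivalently, this follows from commutativity of the $\ZZ_2^n$ action).

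With these two formulas in hand, each coordinate is immediate. Because $Z_R(\vec{x})$ is a composition of right-moving generators and right-moving edges leave $h_L^A$ unchanged by the definition of a $2$-d Adinkra, $h_L^A(\Phi(v,w)) = h_L^A(Z_R(\vec{x})v) = h_L^A(v) = h_L'(v,w)$. Symmetrically, left-moving edges leave $h_R^A$ unchanged, so $h_R^A(\Phi(v,w)) = h_R^A(Z_L(\vec{y})w) = h_R^A(w) = h_R'(v,w)$.

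I do not expect any real obstacle: the structural work has been absorbed into Lemmas~\ref{lem:phisides}, \ref{lem:gphi}, and \ref{lem:phiformula}. The one small but essential observation is the symmetric decomposition $\Phi(v,w) = Z_R(\vec{x})v = Z_L(\vec{y})w$, which cleanly separates the two bigrading coordinates so that each axiom of $2$-d Adinkras (right-moving edges preserve $h_L$, left-moving edges preserve $h_R$) can be invoked in turn.
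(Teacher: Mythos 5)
Your proof is correct and follows essentially the same route as the paper: the $h_L$ case is identical (apply Lemma~\ref{lem:phiformula} and note that $Z_R(\vec{x})$ follows only right-moving colors, hence fixes $h_L$). The paper dismisses the $h_R$ case with ``proved similarly,'' and your explicit derivation of the mirror formula $\Phi(v,w)=Z_L(\vec{y})w$ from Lemmas~\ref{lem:gphi} and~\ref{lem:phisides} is exactly the intended symmetric argument, just spelled out.
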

\begin{proof}
Let $(v,w)\in A'$ with $w=\vec{x}\overline{0}$.  By Lemma~\ref{lem:phiformula}, we get
\[h_L(\Phi(v,w))=h_L(Z_R(\vec{x})v).\]
Since $Z_R(\vec{x})$ follows right-moving colors, this does not affect $h_L$, and so the above is equal to $h_L(v)$.  In $A'=A_L^0\times A_R^0$, this is $h_L(v,w)$. Therefore $\Phi$ preserves $h_L$.  The fact that it preserves $h_R$ is proved similarly.
\end{proof}

\begin{lem}
\label{lem:kgrading}
If $\vec{x}\in K$, and $(v_1,v_2)\in A_L^0\times A_R^0$, then $\vec{x}(v_1,v_2)$ has the same bigrading as $(v_1,v_2)$.
\end{lem}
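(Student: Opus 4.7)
The plan is to reduce the claim to the equivariance and bigrading-preservation properties of $\Phi$ that have already been established, exploiting the fact that elements of $K$ act trivially downstairs in $A$.

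First, I would record the key observation that $K \subset C(A)$, so every $\vec{x} \in K$ fixes every vertex of $A$. This is because $C(A)$ is defined as the stabilizer of $\overline{0}$ under the $\ZZ_2^n$-action, and since $A$ is connected the action is transitive, hence all stabilizers coincide (the group is abelian). Thus for all $u \in A$, $\vec{x} u = u$.

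Next, for $(v_1,v_2) \in A'$ and $\vec{x} \in K$, combine this with Lemma~\ref{lem:gphi} to get
\[
\Phi(\vec{x}(v_1,v_2)) \;=\; \vec{x}\,\Phi(v_1,v_2) \;=\; \Phi(v_1,v_2).
\]
So $\vec{x}(v_1,v_2)$ and $(v_1,v_2)$ have the same image in $A$ under $\Phi$. Now apply Lemma~\ref{lem:mainepibigrading}, which says $\Phi$ preserves $h_L$ and $h_R$: evaluating both sides of the displayed equality against $h_L$ (respectively $h_R$) on $A$, and pulling back through $\Phi$, gives
\[
h_L(\vec{x}(v_1,v_2)) \;=\; h_L(\Phi(\vec{x}(v_1,v_2))) \;=\; h_L(\Phi(v_1,v_2)) \;=\; h_L(v_1,v_2),
\]
and the analogous chain of equalities for $h_R$. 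This is exactly the claim.

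There is no real obstacle here; the lemma is a short corollary of the two properties of $\Phi$ (equivariance and bigrading-preservation) that have already been proved, together with the defining fact that $K$ sits inside the stabilizer code $C(A)$. The only subtlety worth spelling out in the write-up is the first step, that elements of $C(A)$ act as the identity on all vertices of $A$ (not merely on $\overline{0}$), which is immediate from connectedness and the commutativity of $\ZZ_2^n$.
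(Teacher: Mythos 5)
Your proof is correct and follows essentially the same route as the paper: the paper's own (one-sentence) proof likewise combines $K\subset C(A)$ with Lemma~\ref{lem:mainepibigrading}, implicitly using the equivariance of $\Phi$ from Lemma~\ref{lem:gphi} exactly as you make explicit. Your write-up just spells out the intermediate steps more fully.
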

\begin{proof}
This follows from Lemma~\ref{lem:mainepibigrading} and the fact that $K\subset C(A)$, so if $\vec{x}\in K$, then $\vec{x}\Phi(v_1,v_2)=\Phi(v_1,v_2)$.
\end{proof}

\begin{thm}
\label{thm:isograding}
The code $K$ acts on $A'$ via color and bigrading preserving isomorphisms to produce a quotient $A'/K$.  The map $\Phi$ (resp. $\tilde{\Phi}$) is a color and bigrading preserving epimorphism (resp. isomorphism).
\end{thm}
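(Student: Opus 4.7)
The plan is to assemble this theorem directly from the lemmas already established; essentially all the nontrivial work has been front-loaded into Lemmas \ref{lem:mainepibigrading} and \ref{lem:kgrading}. The color-preserving portion of the statement is exactly the content of Theorem \ref{thm:isocolors}, so it will suffice to address the bigrading.

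First I would verify that the $K$-action on $A'$ preserves the bigrading, which is precisely Lemma \ref{lem:kgrading}: every $\vec{x}\in K$ fixes the bigrading on every vertex $(v_1,v_2)$ of $A'$. This allows one to define a bigrading on $A'/K$ by choosing any representative, since all representatives of a $K$-orbit share the same $(h_L,h_R)$. By construction, the quotient map $\pi_2:A'\to A'/K$ is then bigrading-preserving.

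Next, Lemma \ref{lem:mainepibigrading} directly gives that $\Phi$ preserves the bigrading. Combined with the commutative square $\tilde{\Phi}\circ\pi_2=\Phi$ from Theorem \ref{thm:isocolors}, and the fact that $\pi_2$ is surjective and bigrading-preserving, it follows formally that $\tilde{\Phi}$ preserves the bigrading: given $[(v_1,v_2)]\in A'/K$, one has $h_L(\tilde{\Phi}[(v_1,v_2)])=h_L(\Phi(v_1,v_2))=h_L(v_1,v_2)=h_L([(v_1,v_2)])$, and similarly for $h_R$.

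The main obstacle is really a bootstrapping issue that has already been handled: one would like to say ``$K$ preserves bigrading because $K\subset C(A)$,'' but $A$'s bigrading is defined on $A$, not on $A'$. This was circumvented by first proving $\Phi$ preserves bigrading (Lemma \ref{lem:mainepibigrading}, using the explicit formula $\Phi(v,w)=Z_R(\vec{x})v$ from Lemma \ref{lem:phiformula} together with the observation that $Z_R(\vec{x})$ only changes $h_R$), and then transferring invariance back to $A'$ via $\Phi$ (Lemma \ref{lem:kgrading}). With those lemmas in hand, the theorem is merely a bookkeeping conclusion.
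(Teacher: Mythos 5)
Your proof is correct and follows essentially the same route as the paper's: invoke Theorem~\ref{thm:isocolors} for the color-preserving claims, Lemma~\ref{lem:kgrading} for the $K$-action on the bigrading (hence a well-defined bigrading on $A'/K$), and Lemma~\ref{lem:mainepibigrading} for $\Phi$, with $\tilde{\Phi}$ inheriting the property via $\tilde{\Phi}\circ\pi_2=\Phi$. The paper's own proof is just a terser version of this same assembly, so there is nothing to add.
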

\begin{proof}
This theorem builds on Theorem~\ref{thm:isocolors}.  Lemma~\ref{lem:kgrading} means that the action of $K$ on $A'$ preserves the bigrading.  Lemma~\ref{lem:mainepibigrading} provides the rest of this theorem.
\end{proof}

Unfortunately, it is too much to expect $\Phi$ to preserve the dashing, or even that the dashing on $A'=A_L^0\times A_R^0$ is invariant under the action of $K$ (so that $A'/K$ could have an obviously well-defined dashing).  However, if we allow the operation of \emph{vertex switching}, then we can basically accomplish these goals, giving $F$ of Theorem~\ref{thm:quotient}.

Consider the dashing $\mu$ on $A$.  This restricts to $A_L^0$ and $A_R^0$, and Construction~\ref{const:product} produces a dashing $\mu_1$ on $A'=A_L^0\times A_R^0$.   The graph homomorphism $\Phi:A'\to A$ pulls back the dashing $\mu$ to $\mu_2$ on $A'$. While $\mu_1$ and $\mu_2$ can be different, they agree on the following parts of the Adinkra:

\begin{lem}
\label{lem:agree-on-boundary}
The dashings $\mu_1$ and $\mu_2$ agree on $A_L^0\times \{\overline{0}\}$ and on $\{\overline{0}\}\times A_R^0$.
\end{lem}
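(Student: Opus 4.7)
The plan is to unfold the two definitions of the dashing on each of the two slices separately. By construction, $\mu_1$ is built by Construction~\ref{const:product} from the restrictions of $\mu$ to $A_L^0$ and $A_R^0$ (call these $\mu^L$ and $\mu^R$), while $\mu_2$ is the pullback $\mu \circ \Phi$. Lemma~\ref{lem:phisides} tells us that on each of $A_L^0 \times \{\overline{0}\}$ and $\{\overline{0}\} \times A_R^0$, the map $\Phi$ is an isomorphism onto the image (which is $A_L^0 \subset A$, respectively $A_R^0 \subset A$). So on these slices, $\mu_2$ is simply $\mu^L$ (resp.\ $\mu^R$), and the task reduces to checking that $\mu_1$ agrees with these restrictions. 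We also fix once and for all the grading convention on $A_L^0$ so that $h_1(\overline{0}) = 0$; this is a harmless normalization because gradings on $1$-d Adinkras are only defined up to an additive constant.

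For $A_L^0 \times \{\overline{0}\}$, any edge with both endpoints in this slice must be left-moving (a right-moving edge would change the second coordinate away from $\overline{0}$). For a left-moving edge corresponding to $e_1 \in E(A_L^0)$, Construction~\ref{const:product} gives $\mu_1 = \mu^L(e_1) = \mu(e_1)$, and $\mu_2$ equals $\mu$ applied to $\Phi$ of this edge, which by Lemma~\ref{lem:phisides} is also just $\mu(e_1)$. So $\mu_1 = \mu_2$ on this slice. The analogous argument handles $\{\overline{0}\} \times A_R^0$: here every edge is right-moving, and Construction~\ref{const:product} dictates $\mu_1 = \mu^R(e_2) + h_1(\overline{0}) \pmod{2}$ for an edge coming from $e_2 \in E(A_R^0)$. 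With the chosen normalization $h_1(\overline{0}) = 0$, this reduces to $\mu(e_2)$, and Lemma~\ref{lem:phisides} again gives $\mu_2 = \mu(e_2)$.

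The only place anything nontrivial happens is the $+h_1(x) \pmod 2$ correction on right-moving edges in Construction~\ref{const:product}; fixing the grading so that $h_1(\overline{0}) = 0$ makes this correction vanish exactly along the slice $\{\overline{0}\} \times A_R^0$, so no real obstacle arises. Everything else is bookkeeping with the definitions.
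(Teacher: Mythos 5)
Your proof is correct and takes essentially the same route as the paper's: unfold Construction~\ref{const:product} to read off $\mu_1$ on each slice, and use Lemma~\ref{lem:phisides} to identify $\mu_2$ there with the restriction of $\mu$. In fact you are more careful than the paper, whose proof handles $\{\overline{0}\}\times A_R^0$ with a bare ``likewise'': you correctly notice that right-moving edges carry the extra $h_1(x)\bmod 2$ term, which on this slice is the constant $h_1(\overline{0})$, and you dispose of it by normalizing $h_1(\overline{0})=0$ --- a harmless convention that is genuinely needed for the lemma to hold as literally stated.
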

\begin{proof}
The construction of $\mu_1$ gives each edge in $A_L^0\times\{\overline{0}\}$ the same dashing as in $A_L^0$ under the association of every edge $(v,w)$ with $((v,\overline{0}),(w,\overline{0}))$.  Lemma~\ref{lem:phisides} shows that the same is true for $\mu_2$.  Therefore $\mu_1$ and $\mu_2$ agree on $A_L^0\times\{\overline{0}\}$.  Likewise for $\{0\}\times A_R^0$.
\end{proof}

\begin{lem}
\label{lem:cycles-switching-class}
Two dashings $\mu$ and $\bar{\mu}$ on an Adinkra $A$ have the same parity on all cycles\footnote{This type of result has a natural reformulation with homological algebra, done in independent ways by the first author's work using cubical cohomology \cite{dil:cohomology} and the second author's work using CW-complexes \cite{zhang:adinkras}. In either formulation, having parities of $\mu$ and $\bar{\mu}$ agree on cycles is equivalent to $\mu - \bar{\mu_2} = 0$ in cohomology.} if and only if there is a vertex switching on $A$ that turns $\mu$ into $\bar{\mu}$.
\end{lem}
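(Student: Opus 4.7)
I would recognize this as the standard cohomological fact that a $\ZZ_2$-valued $1$-cochain on a graph is a coboundary iff it vanishes on every cycle, reformulated in the vocabulary of dashings and vertex switchings. Setting $\eta := \mu + \bar\mu \pmod{2}$ (a function $E \to \ZZ_2$), the claim is that ``$\eta$ has even sum on every cycle'' is equivalent to the existence of a set $S \subseteq V$ such that, for every edge $e = (v,w)$, $\eta(e) \equiv [v\in S] + [w \in S] \pmod{2}$; the associated vertex switching is then the composition of single-vertex switchings over $S$.

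\textbf{Easy direction.} Suppose $\bar\mu = F(\mu)$ for a vertex switching $F$. Since switching at a vertex twice is the identity on dashings, $F$ corresponds to a set $S\subseteq V$. For any single $v \in S$ and any closed walk $\gamma = (v_0,\ldots,v_k = v_0)$, the number of edges of $\gamma$ incident to $v$ equals twice the number of times $v$ appears in $\gamma$, hence is even, so switching at $v$ changes the parity of $\mu$ on $\gamma$ by $0 \pmod 2$. Summing this over $v\in S$ shows $\mu$ and $\bar\mu$ have equal parity on $\gamma$, and in particular on every (simple) cycle.

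\textbf{Hard direction.} Assume $\eta$ has even sum on every cycle. Treating each connected component separately, assume $A$ is connected and fix a base vertex $v_0$. For each $v \in V$ choose a path $\gamma_v$ from $v_0$ to $v$ and set
\[ f(v) \;:=\; \sum_{e \in \gamma_v} \eta(e) \pmod{2}. \]
To see this is independent of $\gamma_v$: any two paths $\gamma_v$ and $\gamma_v'$ concatenate into a closed walk, and a closed walk decomposes modulo $2$ into a sum of simple cycles (induct on the number of repeated vertices, peeling off a simple cycle each time); by hypothesis $\eta$ sums to $0$ on each such simple cycle, so $f(v)$ is well-defined. For any edge $e = (v,w) \in E$, concatenating $\gamma_v$ with $e$ is a legitimate path to $w$, so $f(w) = f(v) + \eta(e) \pmod 2$, i.e. $\eta(e) = f(v) + f(w) \pmod 2$. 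Let $S = f^{-1}(1)$ and let $F$ be the composition of the vertex switchings at all vertices of $S$ (in any order, since the composition only depends on $S$ modulo $2$). The net change that $F$ applies to the dashing on edge $(v,w)$ is exactly $[v \in S] + [w \in S] = f(v) + f(w) = \eta(e) \pmod 2$, so $F(\mu) = \bar\mu$, as desired.

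\textbf{Main obstacle.} The only nontrivial ingredient is the well-definedness of $f$, which reduces to the mod-$2$ decomposition of a closed walk into simple cycles; this is standard but worth stating carefully. The paper's footnote alludes to the same content in a more formal cohomological framework (\cite{dil:cohomology,zhang:adinkras}), so one could alternatively invoke that machinery, but the direct elementary argument above is self-contained and suffices.
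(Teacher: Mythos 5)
Your proof is correct, but it reaches the hard direction by a different mechanism than the paper. The paper fixes a spanning tree $T$ and inducts on subtrees: at each step it peels off a leaf $v$ with edge $e$, applies the inductive switching to the smaller tree, and switches at $v$ exactly when the dashings still disagree on $e$; it then handles each non-tree edge by observing that this edge together with the tree path between its endpoints forms a single simple cycle on which the parities agree and on which every other edge already matches. You instead integrate the difference $\eta = \mu + \bar\mu$ along paths from a base vertex to build a potential $f$ and switch at $S = f^{-1}(1)$. The two arguments are the usual dual routes to ``a cocycle vanishing on cycles is a coboundary,'' but they lean on slightly different inputs: the paper only ever invokes the cycle hypothesis on the fundamental simple cycles of a spanning tree, whereas your well-definedness step needs it for arbitrary closed walks, which you must (and do) reduce to simple cycles via the mod-$2$ decomposition of a closed walk into edge-disjoint simple cycles. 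That auxiliary decomposition is standard and your sketch of it is adequate, so there is no gap; what your version buys is an explicit closed-form description of the switching set $S$, rather than one produced vertex by vertex through the induction.
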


\begin{proof}
Since vertex switching preserves parity on any cycle, the ``if'' direction is trivial and it suffices to prove the other direction.

Assume $\mu$ and $\bar{\mu}$ have the same parity on all cycles. It suffices to prove the statement for $A$ connected, since we can repeat our argument on each connected component of $A$.

Next, we shall prove that for any tree $T$ that is a subgraph of $A$, there exists a vertex switching $F$ on $A$ so that $F(\mu)$ and $\bar{\mu}$ agree on $T$.  This can be proved by induction the number of vertices in $T$.  The base case of one vertex is trivial.  If $T$ has more than one vertex, then there is a leaf $v$ in $T$ incident to only one edge $e$ in $T$.  Let $T_0$ be the tree with the vertex $v$ and edge $e$ omitted.  Then $T_0$ has one fewer vertex than $T$ so by the inductive hypothesis, there is a vertex switching $F_0$ on $A$ so that $F_0(\mu)$ and $\bar{\mu}$ agree on $T_0$.  If $F_0(\mu)(e)\not=\bar{\mu}(e)$, then let $F$ be the vertex switching $F_0$ followed by a vertex switching at $v$; otherwise let $F=F_0$.  Then $F(\mu)$ and $F_0(\mu)$ agree on $T_0$, so $F(\mu)$ agrees with $\bar{\mu}$ on all of $T$.

Now in the case where $T$ is a spanning tree (so that it is maximal), we claim that $F(\mu)$ and $\bar{\mu}$ agree on all of $A$.  Consider any edge $e$ not in $T$. This edge completes at least one cycle with edges in $T$ (otherwise $T$ was not a spanning tree). Since the two cycles have the same parity in $F(\mu)$ and $\bar{\mu}$ by assumption, and the $F(\mu)$ and $\bar{\mu}$ agree on all edges in the cycle except for $e$, they must agree on $e$ as well.  Thus, $F(\mu)=\bar{\mu}$ on all of $A$.
\end{proof}

Now we return to the two dashings $\mu_1$ and $\mu_2$ on $A'$.  Based on what we just proved, the following lemma will assure the existence of a vertex switching that sends $\mu_1$ to $\mu_2$.  It uses the fact that these two dashings agree on $A_L^0\times\{\overline{0}\}$ and $\{\overline{0}\}\times A_R^0$ (Lemma~\ref{lem:agree-on-boundary}).  In terms of the cubical cohomology, this result is a kind of K\"unneth theorem.

\begin{lem}
\label{lem:switch12}
The parities of $\mu_1$ and $\mu_2$ agree on all cycles of $A'$.
\end{lem}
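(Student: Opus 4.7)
My plan is to establish the lemma by showing $(\mu_1+\mu_2)(\gamma)\equiv 0\pmod 2$ for every cycle $\gamma$ in $A'$. Since the parity of a sum of dashings is additive over the symmetric difference of cycles, it suffices to verify this on a generating set for the cycle space $Z_1(A';\ZZ_2)$. I will use as generators the $2$-colored simple $4$-cycles of $A'$ together with one code cycle at the base vertex $(\overline{0},\overline{0})$ for each generator of $C\,'$, where by the ``code cycle for $w\in C\,'$'' I mean the closed walk starting at $(\overline{0},\overline{0})$ that traverses, in any fixed order, one edge of color $i$ for each bit $w_i=1$. This generating set is justified by the identification $A' \cong I^n/C\,'$ together with the fact that the $2$-skeleton of $I^n$ obtained by attaching a $2$-cell to each $2$-colored $4$-cycle is contractible, so the cycle space of $A'$ is generated by these $4$-cycles plus loops realizing the deck group $C\,'$.

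The verification on $2$-colored $4$-cycles is immediate because both dashings are admissible: $\mu_1$ by Proposition~\ref{prop:product-admissible}, and $\mu_2 = \Phi^{*}\mu$ because $\Phi$ is a color-preserving graph epimorphism, so every $2$-colored simple $4$-cycle in $A'$ maps to such a cycle in $A$. Both dashings thus have odd parity on every $2$-colored simple $4$-cycle, and hence $\mu_1+\mu_2$ has even parity there. For the code cycles, the key observation is the splitting $C\,' = Z_L(C(A_L^0)) \oplus Z_R(C(A_R^0))$, which lets me choose a basis of the form $\{Z_L(a_i)\}_i \cup \{Z_R(b_j)\}_j$ with $\{a_i\}$ a basis of $C(A_L^0)$ and $\{b_j\}$ a basis of $C(A_R^0)$. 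The code cycle at $(\overline{0},\overline{0})$ corresponding to $Z_L(a_i)$ uses only left-moving colors, so lies entirely in the subgraph $A_L^0\times\{\overline{0}\}$, where $\mu_1$ and $\mu_2$ agree by Lemma~\ref{lem:agree-on-boundary}; therefore $\mu_1+\mu_2$ has parity $0$ on this cycle. The argument for $Z_R(b_j)$ is symmetric, using $\{\overline{0}\}\times A_R^0$. Combining the two types of generators, $\mu_1+\mu_2$ vanishes on all of $Z_1(A';\ZZ_2)$.

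The main obstacle is the combinatorial-topological claim that $2$-colored $4$-cycles together with base-point code cycles span $Z_1(A';\ZZ_2)$. This can be made rigorous by a spanning-tree construction compatible with the product structure $A_L^0\times A_R^0$ (for instance taking $T=(T_L\times\{\overline{0}\})\cup(A_L^0\times T_R)$ for spanning trees $T_L,T_R$ of the two factors), by a covering-space argument for the projection $I^n\to I^n/C\,'$, or via the cohomological framework mentioned in the footnote to Lemma~\ref{lem:cycles-switching-class}, where the lemma becomes the K\"unneth-type statement that a $1$-cocycle on $A_L^0\times A_R^0$ which restricts trivially to both product slices is itself cohomologically trivial. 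The chosen basis for $C\,'$ is precisely what makes this implicit K\"unneth computation elementary.
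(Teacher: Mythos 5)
Your argument is correct and arrives at the same place as the paper's, but by a genuinely different, more homological route. The paper works directly with an arbitrary cycle: it records the cycle as a color sequence based at $\overline{0}$ and performs a ``bubble sort'' of adjacent colors, noting that each adjacent swap replaces one vertex of the cycle by the opposite corner of a two-colored $4$-cycle and therefore flips the parity of \emph{every} admissible dashing by $1$, so the difference of $\mu_1$ and $\mu_2$ is unchanged; after sorting, the cycle becomes a left-moving loop in $A_L^0\times\{\overline{0}\}$ followed by a right-moving loop in $\{\overline{0}\}\times A_R^0$, where Lemma~\ref{lem:agree-on-boundary} finishes (cycles not through $\overline{0}$ are conjugated by a path to $\overline{0}$, exactly as you would). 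You instead treat $\mu_1+\mu_2$ as a $\ZZ_2$-functional on the cycle space and check it vanishes on generators. These are the same computation at different levels of abstraction: your ``sum of two-colored $4$-cycles'' is precisely the ledger of the paper's adjacent swaps, and your code cycles for the split basis of $C\,'$ are precisely the sorted loops the paper ends up with. Your version buys brevity and the clean K\"unneth-type statement the paper itself gestures at in the surrounding remarks; the paper's version buys self-containedness, since it never has to justify that two-colored $4$-cycles together with base-point code cycles span $Z_1(A';\ZZ_2)$.

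Two points to tighten. First, the complex obtained from $I^n$ by attaching $2$-cells to the two-colored $4$-cycles is simply connected but not contractible for $n\ge 3$ (for $n=3$ it is $S^2$); simple connectivity is all you need for the cycle space of $I^n$ to be spanned by the $4$-cycle boundaries, so the conclusion stands, but the wording should be corrected. Second, the spanning claim for the quotient $A'\cong I^n/C\,'$ is the one real step you leave as a sketch; it should either be carried out (nontrivial elements of the doubly even code $C\,'$ move vertices distance at least $4$, so $I^n\to I^n/C\,'$ is a covering with deck group $C\,'$, closed walks at $(\overline{0},\overline{0})$ lift to paths from $\vec{0}$ to codewords, and two lifts with the same endpoint differ by a sum of $4$-cycle boundaries whose images are two-colored $4$-cycles of $A'$) or be cited from \cite{dil:cohomology} or \cite{zhang:adinkras}. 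Granting that, the rest of your verification is sound: $\mu_1$ is admissible by Proposition~\ref{prop:product-admissible}, $\mu_2=\Phi^{*}\mu$ is admissible because $\Phi$ is color-preserving and locally injective so two-colored $4$-cycles push forward to two-colored $4$-cycles, and the code cycles for $Z_L(a_i)$ and $Z_R(b_j)$ do stay in the two product slices where Lemma~\ref{lem:agree-on-boundary} applies.
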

\begin{proof}
Let our cycle be $(v_0,v_1,\ldots,v_k)$ with $v_0=v_k$.  We first consider the case where $v_0=\overline{0}$.

For this proof, we define a \emph{color sequence} of a path to be the sequence of colors $(c(v_0,v_1),c(v_1,v_2),\ldots,c(v_{k-1},v_k))$ of edges along the path.  Note that given a starting vertex $v_0$ and a color sequence, there is a unique path that starts at $v_0$ with that color sequence\footnote{Recall in Section~\ref{sec:code} we treated this sequence of colors as a $\ZZ_2^n$ action on the underlying \textbf{graph}, where the order did not matter. In this proof we are not just traversing the graph but also keeping track of the sign of the dashings, so we have to keep in mind the order.}.  This follows by applying induction to Property $2$ of the definition of an Adinkra. 

We begin with the color sequence for the cycle $(v_0,\ldots,v_k)$.  We will now describe a series of modifications to this cycle, described by modifying the color sequence.  The idea is to perform a ``bubble sort'', by iteratively swapping adjacent colors until the left-moving colors are all at the beginning and the right-moving colors are all at the end.

First, given a color sequence
\[(c_1,\ldots,c_{j-1},c_j,c_{j+1},c_{j+2},\ldots,c_k), \]
an adjacent swap results in a color sequence
\[(c_1,\ldots,c_{j-1},c_{j+1},c_j,c_{j+2},\ldots,c_k).\]
Modifying a color sequence in this way leads to a new path from $\overline{0}$.  The path is unchanged up to $v_{j-1}$, but by the definition of Adinkras, property 2, the path returns to $v_{j+1}$ so it is only $v_j$ that has changed (see Figure~\ref{fig:colorswap}).  Thus, the new path is still a cycle starting at $\overline{0}$.  The effect on the parity of any dashing is, by property 3, to add $1$ modulo $2$.  In particular, $\mu_1$ and $\mu_2$ are both affected in the same way.

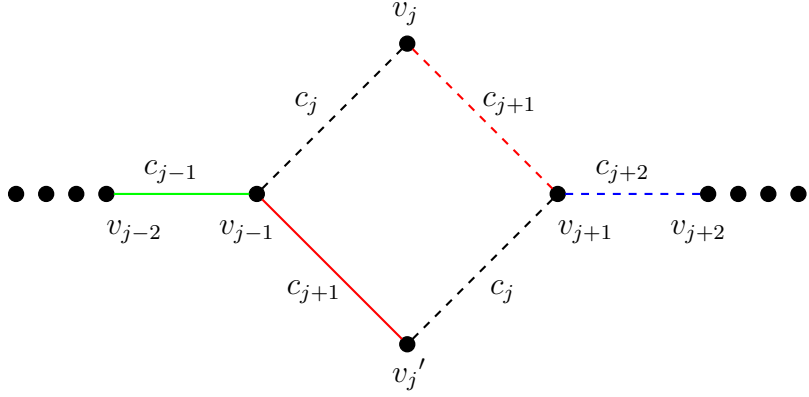
\begin{figure}
\begin{center}
\begin{tikzpicture}[scale=0.1]
\SetVertexSimple[MinSize=5pt]
\SetUpEdge[labelstyle={draw}]
\Vertex[x=0,y=0]{A}
\Vertex[x=20,y=0]{B}
\Vertex[x=40,y=20]{C}
\Vertex[x=40,y=-20]{D}
\Vertex[x=60,y=0]{E}
\Vertex[x=80,y=0]{F}
\Edge[color=green](A)(B)
\Edge[color=black,style=dashed](B)(C)
\Edge[color=red,style=dashed](C)(E)
\Edge[color=red](B)(D)
\Edge[color=black,style=dashed](D)(E)
\Edge[color=blue,style=dashed](E)(F)
\Vertex[x=-4,y=0]{E1}
\Vertex[x=-8,y=0]{E2}
\Vertex[x=-12,y=0]{E3}
\Vertex[x=84,y=0]{E4}
\Vertex[x=88,y=0]{E5}
\Vertex[x=92,y=0]{E6}
\node[text width=3cm] at (15, -5) {$v_{j-2}$};
\node[text width=3cm] at (30, -5) {$v_{j-1}$};
\node[text width=3cm] at (53, 24) {$v_{j}$};
\node[text width=3cm] at (53, -24) {$v_{j}{}'$};
\node[text width=3cm] at (75, -5) {$v_{j+1}$};
\node[text width=3cm] at (90, -5) {$v_{j+2}$};
\node[text width=3cm] at (20,3) {$c_{j-1}$};
\node[text width=3cm] at (40,12) {$c_{j}$};
\node[text width=3cm] at (65,12) {$c_{j+1}$};
\node[text width=3cm] at (39,-13) {$c_{j+1}$};
\node[text width=3cm] at (66,-13) {$c_{j}$};
\node[text width=3cm] at (80,3) {$c_{j+2}$};
\end{tikzpicture}
\end{center}
\caption{An adjacent swap of colors.  If we swap colors in position $j$ and $j+1$, vertex $v_j$ will be replaced by vertex $v_j{}'$, and the path gets modified from the upper path to the lower path in the diagram.  The parity of the path changes by exactly $1$ modulo $2$ because the square above has odd parity.
\label{fig:colorswap}}
\end{figure}

It is straightforward to find a series of adjacent swaps so that the left-moving colors are moved to the beginning of the color sequence.  Then the resulting path starts from $\overline{0}$, stays in $A_L^0\times\{0\}$, then follows right-moving edges, ending in $\overline{0}$.  Since the right-moving edges end in $\overline{0}$, it must be that the right-moving edges are in $\{0\}\times A_R^0$.  By Lemma~\ref{lem:agree-on-boundary}, $\mu_1$ and $\mu_2$ are equal here, and so their parities on this modified path are the same.  Therefore, their parities on the original loop were the same.

Now we consider loops $p$ where $v_0\not=\overline{0}$.  Since $A_L^0\times A_R^0$ is connected, there is a path $p_0$ from $\overline{0}$ to $v_0$.  Take the path $p_0$, followed by $p$, then followed by $p_0^{-1}$ (meaning $p_0$ traversed in the opposite sense).  This is a loop starting and ending in $\overline{0}$, but the parity of a dashing is the same as that of $p$, since every new edge in $p_0$ is counterbalanced by a new edge in $p_0^{-1}$.  Therefore the parities of $\mu_1$ and $\mu_2$ agree on all loops.
\end{proof}

We are now ready to put everything together and prove our main theorem.

\begin{proof}[Proof of Theorem~\ref{thm:quotient}]
Let $A$ be a connected $2$-d Adinkra, and define $A_L^0$, $A_R^0$, $A'$, $C\,'$, $K$, $\Phi$, and $\tilde{\Phi}$ as above.

Use Construction~\ref{const:product} to construct the $2$-d Adinkra $A'=A_L^0\times A_R^0$ with dashing $\mu_1$.
By Theorem~\ref{thm:isocolors} and Theorem~\ref{thm:isograding}, there is a graph homomorphism $\Phi:A' \to A$ that preserves colors and bigrading.  If we take the dashing $\mu$ from $A$ and pull it back using $\Phi$ to a dashing $\mu_2$ on $A'$, then Lemma~\ref{lem:switch12} and Lemma~\ref{lem:cycles-switching-class} together gives the existence of a vertex switching $F$ sending $\mu_1$ to $\mu_2$.

By Theorem~\ref{thm:isograding}, $K$ acts on $A'$ to produce $A'/K$, a well-defined $2$-d Adinkra without dashing, and an isomorphism $\tilde{\Phi}:A'/K\to A$ that preserves colors and the bigrading.  Since $\mu_2$ is invariant under $K$, we obtain  $F(A')/K$, a well-defined $2$-d Adinkra with dashing. Since $F(A')$ and $A'$ only differ in dashing, $\tilde{\Phi}$ is still an isomorphism that preserves colors and bigrading.  Since $\mu_2$ is obtained by pulling back $\mu$ from $A$, this isomorphism preserves dashing as well.
\end{proof}

\section{The Structure of $2$-d Adinkras}
\label{sec:structure}
Theorem \ref{thm:quotient} is very powerful; we immediately know a lot about what a $2$-d Adinkra must look like. Let the \emph{support} of a $2$-d Adinkra (and/or its bigrading function $(h_L,h_R)$) be defined as the range of $(h_L,h_R)$, its bigrading function. Then:
\begin{cor}
\label{cor:rectangle}
Let $A$ be a connected $2$-d Adinkra.  The support of $A$ is a rectangle.  That is, there exist integers $x_0$, $x_1$, $y_0$, and $y_1$ such that the support is
\[\{(i,j)\in\ZZ^2\,|\,x_0 \le i\le x_1\mbox{ and }y_0\le j\le y_1\}.\]
\end{cor}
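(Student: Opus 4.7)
The plan is to apply the main theorem (Theorem~\ref{thm:quotient}) and track how the support transforms under each of its constituent operations. We write $A \cong F(A_1\times A_2)/K$ with $A_1=A_L^0$ and $A_2=A_R^0$ connected 1-d Adinkras, $F$ a vertex switching, and $K$ a subgroup of $\ZZ_2^n$. The idea is to show (i) each $A_i$ has an interval for its grading support, (ii) the product's bigrading support is the Cartesian product of these intervals, hence a rectangle, and (iii) neither the vertex switching $F$ nor the quotient by $K$ alters the set of values taken by the bigrading.

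First I would check that a connected 1-d Adinkra $B$ with grading $h$ has support an interval $[a,b]\subset\ZZ$. Since $B$ is connected and each edge changes $h$ by $\pm 1$, for any vertex $v$ attaining the minimum value $a$ and any vertex $w$ attaining the maximum $b$, the sequence of $h$-values along a path from $v$ to $w$ is a walk on $\ZZ$ changing by $\pm 1$ at each step, so every integer in $[a,b]$ appears as $h(u)$ for some $u$ along the path. Applying this to $A_1$ and $A_2$ produces intervals $[x_0,x_1]$ and $[y_0,y_1]$.

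Next, for the product $A_1\times A_2$, the bigrading on a vertex $(v,w)$ is $(h_1(v), h_2(w))$ by construction; its range is therefore $[x_0,x_1]\times[y_0,y_1]$, a rectangle. The vertex switching $F$ only alters the dashing, not the vertex set or the bigrading, so $F(A_1\times A_2)$ has the same support. Finally, by Theorem~\ref{thm:isograding} (equivalently Lemma~\ref{lem:kgrading}), the action of $K$ preserves the bigrading, so every $K$-orbit lies in a single level set of $(h_L,h_R)$; consequently the quotient's bigrading is well-defined and its support equals the support of $F(A_1\times A_2)$. Combining these steps, the support of $A$ is exactly $[x_0,x_1]\times[y_0,y_1]$.

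There is essentially no hard step here: once the main theorem is in hand, the argument is a matter of checking that vertex switching and $K$-quotient do not change the range of the bigrading, and that products of connected 1-d Adinkras have rectangular support. The only substantive ingredient is the interval property of 1-d Adinkra gradings, which is an immediate consequence of connectedness together with the requirement $|h(v)-h(w)|=1$ on edges.
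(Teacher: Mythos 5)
Your proposal is correct and follows essentially the same route as the paper: both arguments derive the interval property of the gradings of the connected components $A_L^0$ and $A_R^0$, observe that the product has the Cartesian product of these intervals as its support, and then note that passing to $A$ does not change the support (the paper phrases this last step via the bigrading-preserving epimorphism $\Phi$, while you phrase it via the vertex switching and the $K$-quotient, which is equivalent). Your explicit intermediate-value argument for the interval property is a detail the paper leaves implicit, but the substance is identical.
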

\begin{proof}
Since $A_L^0$ is a connected $1$-d Adinkra and edges change $h_L$ by 1, there are integers $x_0$ and $x_1$ so that the range of the grading is $\{i\,|\,x_0\le i\le x_1\}$.  Likewise $A_R^0$ has a range of grading $\{j\,|\,y_0\le j\le y_1\}$ for some integers $y_0$ and $y_1$.

By Construction~\ref{const:product}, the support of $A'=A_L^0\times A_R^0$ is
\[\{(i,j)\,|\,x_0\le i\le x_1\mbox{ and }y_0\le j\le y_1\}.\]
Since $\Phi$ preserves the bigrading, this is the support of $A$ as well.
\end{proof}

\begin{prop}
\label{prop:componentsiso}
Let $A$ be a connected $2$-d Adinkra.  All connected components of $A_L$ (and respectively $A_R$) are isomorphic as graded posets.
\end{prop}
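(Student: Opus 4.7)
The plan is to exploit the transitive $\ZZ_2^n$-action on $A$ (guaranteed by connectedness; see Section~\ref{sec:prelim}) together with its compatible decomposition along the left/right color split. Concretely, given two connected components $C_1, C_2$ of $A_L$, I plan to build an explicit graded-poset isomorphism $C_1 \to C_2$ by restricting the action of a purely right-moving group element; this is natural because a right-moving element should permute the left-moving components without disturbing their internal left-moving structure.

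Pick vertices $v_1 \in C_1$, $v_2 \in C_2$, and by transitivity choose $\vec{x} \in \ZZ_2^n$ with $\vec{x} v_1 = v_2$. Decompose $\vec{x} = \vec{x}_L + \vec{x}_R$ with $\vec{x}_L \in Z_L(\ZZ_2^p)$ and $\vec{x}_R \in Z_R(\ZZ_2^q)$. I claim that $\vec{x}_R$ restricts to a graph isomorphism $C_1 \to C_2$ preserving $h_L$. The map $\vec{x}_R$ is a composition of involutions $q_i$ for right-moving colors $i$; each such $q_i$ is a color-preserving graph automorphism of $A$ that fixes $h_L$ pointwise (because right-moving edges change only $h_R$). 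Thus $\vec{x}_R$ is a color- and $h_L$-preserving graph automorphism of $A$. To see it maps $C_1$ into $C_2$, note that $C_1$ consists of the vertices $\vec{y}_L v_1$ with $\vec{y}_L \in Z_L(\ZZ_2^p)$; since $\ZZ_2^n$ is abelian, $\vec{x}_R(\vec{y}_L v_1) = \vec{y}_L(\vec{x}_R v_1)$, and the vertex $\vec{x}_R v_1$ satisfies $\vec{x}_L(\vec{x}_R v_1) = v_2$, which places $\vec{x}_R v_1$ in the left-moving orbit of $v_2$, namely $C_2$. Hence $\vec{x}_R(C_1) \subseteq C_2$; applying the same reasoning with $v_1,v_2$ reversed (and using that $\vec{x}_R$ is an involution) gives $\vec{x}_R(C_2) \subseteq C_1$, so the restriction is a bijection.

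Since $\vec{x}_R$ preserves the underlying graph structure and the height function $h_L$, and since the graded poset structure on a component of $A_L$ is precisely the Hasse diagram ranked by $h_L$, the restriction is an isomorphism of graded posets. The argument for $A_R$ is the mirror image (swap the roles of left and right everywhere, and use a $\vec{x}_L$ instead). I do not foresee a serious obstacle; the only subtlety is tracking the commutativity and involution properties of the $\ZZ_2^n$-action carefully enough to conclude that $\vec{x}_R(C_1) = C_2$ exactly, rather than landing in some larger subset, and this is handled cleanly by the abelianness of $\ZZ_2^n$.
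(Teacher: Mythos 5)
Your proof is correct, and its core mechanism is the same as the paper's: use the transitive $\ZZ_2^n$-action on the connected Adinkra to transport one connected component of $A_L$ onto another. The paper's own proof simply takes the full element $\vec{x}$ with $\vec{x}\,\overline{0}=v$ and observes that $w\mapsto\vec{x}(w)$ is a color-preserving graph automorphism of $A_L$, hence permutes its components; it does not explicitly address why the resulting bijection respects the grading. Your refinement of splitting $\vec{x}=\vec{x}_L+\vec{x}_R$ and acting only by the right-moving part $\vec{x}_R$ is a genuine improvement on exactly that point: since right-moving edges fix $h_L$ pointwise, your map preserves the rank function on the nose (not merely up to the ambiguity a bare graph isomorphism of a Hasse diagram would leave), and the commutativity argument showing $\vec{x}_R(C_1)=C_2$ is airtight. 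So your write-up proves the ``graded poset'' clause more directly than the paper's, at the cost of a little extra bookkeeping.
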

\begin{proof}
Consider a connected component of $A_L$.  Suppose $v$ is a vertex in this connected component.  Then there is a $\vec{x}\in\ZZ_2^n$ so that $\vec{x}\overline{0}=v$.  Then the map $f:A_L\to A_L$ with $f(w)=\vec{x}(w)$ is a color-preserving graph isomorphism.  Thus, it sends connected components onto connected components, and in particular, $A_L^0$ to the connected component containing $v$.
\end{proof}

From these results and the results of the previous section, we now fully know what a $2$-d Adinkra looks like. All of the graphical data of a $2$-d Adinkra is basically dictated by the connected components (one from left-moving colors and one from right-moving colors) of any single vertex and how they are glued together (this is what $K$ encodes); then these ``slices'' are put together into a rectangle.

\subsection{Constructing $2$-d Adinkras}
An alternate way to view Theorem~\ref{thm:quotient} is as a way to construct all $2$-d Adinkras.

\begin{construction}
First choose any doubly-even code and any codeword in its orthogonal complement. Theorem~\ref{thm:esdeclassify} shows this is exactly the amount of data we need to create an ESDC code $C$.  Write
\[C=C_L\oplus C_R\oplus K,\]
where $C_L=C\cap Z_L(\ZZ_2^p)$ and $C_R=C\cap Z_R(\ZZ_2^q)$. Use the quotient construction involved in Theorem~\ref{thm:1d-quotients} to create the $1$-d Adinkras $A_1=I^p/\pi_L(C_L)$ and $A_2=I^q/\pi_R(C_R)$, where $\pi_L:\ZZ_2^n\to\ZZ_2^p$ is projection onto the first $p$ bits and $\pi_R:\ZZ_2^n\to\ZZ_2^q$ is projection onto the last $q$ bits.

Second, we need a grading $h_1$ on $A_1$ that is invariant under $\pi_L(K)$ and a grading $h_2$ on $A_2$ that is invariant under $\pi_R(K)$. There are a finite number of rank functions to consider for each graph, so it is definitely possible in principle to generate all gradings\footnote{For specific algorithms, one can use either the ``hanging gardens'' construction in Ref.~\cite{d2l:graph-theoretic} or consider the vertices of the ``rank family poset'' from \cite{zhang:adinkras}.} though this is expected to be a large set for higher $n$. Construct the quotient $A=(A_1\times A_2)/K$ using Construction~\ref{const:product}.  This produces a graph with colors and a bigrading. The invariance of the gradings $h_1$ and $h_2$ makes this bi-grading well-defined under the quotient.

Finally, we put an admissible dashing on $A$. There is again a finite number of possible dashings, so this doable via an exhaustive process. We can obtain dashings on the $1$-d Adinkra $I^n/C$ and use $\Phi$ to pull them back to $A$. Recall the discussion after Theorem~\ref{thm:1d-quotients} for relevant results. 

\end{construction}

\begin{thm}
Every $2$-d Adinkra can obtained by this construction.
\end{thm}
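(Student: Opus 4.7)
The plan is to recognize this theorem as essentially the converse direction packaging of Theorem~\ref{thm:quotient}, so the proof reduces to showing that each free choice in the construction can be matched to data read off of a given $2$-d Adinkra $A$. First I would reduce to the connected case: a general $2$-d Adinkra is the disjoint union of its connected components, and if each component is produced by the construction then so is $A$. So assume $A$ is connected and fix a basepoint $\overline{0}\in A$.

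Next I would read off the input data for the construction from $A$. Take $C=C(A)$, which by Theorem~\ref{thm:esde} is an ESDE code, and by Theorem~\ref{thm:esdeclassify} this is precisely the data of a doubly-even code together with a codeword in its orthogonal complement. Using the decomposition of Lemma~\ref{lem:existk}, write $C=C_L\oplus C_R\oplus K$ with $C_L=Z_L(C(A_L^0))$ and $C_R=Z_R(C(A_R^0))$; under $\pi_L$ and $\pi_R$ this yields exactly the subcodes the construction uses. By Theorem~\ref{thm:1d-quotients}, the $1$-d Adinkras $A_1=I^p/\pi_L(C_L)$ and $A_2=I^q/\pi_R(C_R)$ produced by the construction are canonically isomorphic as colored graphs to $A_L^0$ and $A_R^0$.

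For the bigrading step, I would take the grading $h_1$ on $A_1$ inherited from $h_L$ restricted to $A_L^0$, and $h_2$ on $A_2$ inherited from $h_R$ restricted to $A_R^0$; Lemma~\ref{lem:kgrading} then guarantees these are invariant under $\pi_L(K)$ and $\pi_R(K)$ respectively. Construction~\ref{const:product} thus produces $A_1\times A_2$ as a $2$-d Adinkra, and quotienting by $K$ gives a well-defined colored, bigraded graph. By Theorem~\ref{thm:isograding}, the isomorphism $\tilde{\Phi}$ identifies $(A_1\times A_2)/K$ with $A$ as a colored, bigraded graph.

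The main obstacle is the dashing, since Construction~\ref{const:product}'s product dashing $\mu_1$ need not agree with the pullback $\mu_2$ of $A$'s dashing under $\Phi$, nor is it a priori invariant under $K$ in a way compatible with $A$'s dashing. This is exactly where I would invoke the final freedom of the construction, namely the choice of any admissible dashing on $I^n/C$. By Lemma~\ref{lem:switch12} and Lemma~\ref{lem:cycles-switching-class}, $\mu_1$ and $\mu_2$ differ on $A_1\times A_2$ by a vertex switching $F$, and $\mu_2$ is $K$-invariant by construction; choosing the admissible dashing on $I^n/C$ corresponding to $A$'s dashing and pulling it back gives precisely $\mu_2$ on the quotient. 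The resulting $2$-d Adinkra is then isomorphic to $A$ via $\tilde\Phi$ as colored, bigraded, dashed graphs, completing the proof.
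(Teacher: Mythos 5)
Your proposal is correct and follows essentially the same route as the paper's own (much terser) proof: read off the ESDE code $C=C(A)$, identify $A_1,A_2$ with $A_L^0,A_R^0$, check that the restricted gradings are invariant under $\pi_L(K)$ and $\pi_R(K)$, and delegate the dashing and the final identification to Theorem~\ref{thm:quotient}. The only cosmetic differences are that you derive the grading invariance from Lemma~\ref{lem:kgrading} where the paper verifies it directly from $\pi_L(g)v=\pi_R(g)v$ for $g\in C$, and that you make explicit the reduction to the connected case, which the paper leaves implicit.
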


\begin{proof}
Given any $2$-d Adinkra $A$, there is an ESDC code $C$.  Pick a vertex $\overline{0}$ and define $A_L^0$ and $A_R^0$ as in (\ref{thm:quotient}).

Restrict the gradings $h_L$ and $h_R$ onto $A_L^0$ and $A_R^0$.  Note that if $g\in C$, then $\pi_L(g)v=\pi_R(g)v$, and so $h_L(\pi_L(g)v)=h_L(v)$ and $h_R(\pi_L(g)v)=h_R(v)$.  Therefore $h_L$ restricted to $A_L^0$ is invariant under $\pi_L(K)$.  Likewise $h_R$ restricted to $A_R^0$ is invariant under $\pi_R(K)$. The dashings, as described in \cite{d2l:topology}, can be obtained by choosing the specific quotient $I^n/C$. Theorem~\ref{thm:quotient} gives a description of $A$ in terms of this construction.
\end{proof}

\begin{ex}
Consider the code given by the generating matrix/vector $\left[\begin{array}{cc|cc}
1&1&1&1
\end{array}\right].$ Then $C_L$ and $C_R$ are trivial, with $p=q=2$, and $K$ is generated by $\left[\begin{array}{cccc}
1&1&1&1
\end{array}\right].$ 

As graphs, the Adinkras $A_1$ and $A_2$ are both isomorphic to $I^2$. There are normally $2$ ways (up to relabeling of vertices) to put a rank function on $I^2$:

\begin{center}
\begin{tikzpicture}[scale=0.1]
\SetVertexSimple[MinSize=5pt]
\SetUpEdge[labelstyle={draw}]
\Vertex[x=0,y=0]{00}
\Vertex[x=0,y=40]{11}
\Vertex[x=-5,y=20]{10}
\Vertex[x=5,y=20]{01}
\Edge[color=green](00)(10)
\Edge[color=green](01)(11)
\Edge[color=black](00)(01)
\Edge[color=black](10)(11)

\Vertex[x=40,y=0]{a00}
\Vertex[x=50,y=0]{a11}
\Vertex[x=40,y=20]{a10}
\Vertex[x=50,y=20]{a01}
\Edge[color=green](a00)(a10)
\Edge[color=green](a01)(a11)
\Edge[color=black](a00)(a01)
\Edge[color=black](a10)(a11)

\draw [->] (-20,-5) -- (-20,45);
\draw (-19,0) -- (-21,0) node [align=right, left] {$1$};
\draw (-19,20) -- (-21,20) node [align=right, left] {$2$};
\draw (-19,40) -- (-21,40) node [align=right, left] {$3$};
\node [right] at (-20,45) {$h$};

\end{tikzpicture}
\end{center}

However, the height function must be invariant under $\pi_L(C)=\pi_R(C)=\langle 11\rangle$. In other words, moving once with an edge of both colors should not change the grading. Thus, both $A_1$ and $A_2$ can only be graded via the rank function depicted on the right. So they must look like this, after assigning colors:

\begin{center}
\begin{tikzpicture}[scale=0.1]
\SetVertexSimple[MinSize=5pt]
\SetUpEdge[labelstyle={draw}]
\Vertex[x=-5,y=0]{00}
\Vertex[x=5,y=0]{11}
\Vertex[x=-5,y=20]{10}
\Vertex[x=5,y=20]{01}
\Edge[color=blue](00)(10)
\Edge[color=blue](01)(11)
\Edge[color=red](00)(01)
\Edge[color=red](10)(11)

\Vertex[x=40,y=0]{a00}
\Vertex[x=50,y=0]{a11}
\Vertex[x=40,y=20]{a10}
\Vertex[x=50,y=20]{a01}
\Edge[color=green](a00)(a10)
\Edge[color=green](a01)(a11)
\Edge[color=black](a00)(a01)
\Edge[color=black](a10)(a11)

\end{tikzpicture}
\end{center}

We now take the product $A_1\times A_2$, which has $4 \times 4 = 16$ vertices:
\begin{center}
\begin{tikzpicture}[scale=0.05]
\SetVertexSimple[MinSize=5pt]
\SetUpEdge[labelstyle={draw}]
\Vertex[x=0,y=0]{0000}
\Vertex[x=10,y=0]{0110}
\Vertex[x=20,y=0]{1001}
\Vertex[x=30,y=0]{1111}

\Vertex[x=-40,y=40]{0010}
\Vertex[x=-30,y=40]{0100}
\Vertex[x=-20,y=40]{1011}
\Vertex[x=-10,y=40]{1101}

\Vertex[x=30,y=40]{0001}
\Vertex[x=40,y=40]{0111}
\Vertex[x=50,y=40]{1000}
\Vertex[x=60,y=40]{1110}

\Vertex[x=0,y=80]{0011}
\Vertex[x=10,y=80]{0101}
\Vertex[x=20,y=80]{1010}
\Vertex[x=30,y=80]{1100}

\Edge[color=blue](0000)(0010)
\Edge[color=blue](0100)(0110)
\Edge[color=blue](1000)(1010)
\Edge[color=blue](1100)(1110)
\Edge[color=blue](0001)(0011)
\Edge[color=blue](0101)(0111)
\Edge[color=blue](1001)(1011)
\Edge[color=blue](1101)(1111)

\Edge[color=red](0000)(0100)
\Edge[color=red](0010)(0110)
\Edge[color=red](1000)(1100)
\Edge[color=red](1010)(1110)
\Edge[color=red](0001)(0101)
\Edge[color=red](0011)(0111)
\Edge[color=red](1001)(1101)
\Edge[color=red](1011)(1111)

\Edge[color=green](0000)(1000)
\Edge[color=green](0100)(1100)
\Edge[color=green](1110)(0110)
\Edge[color=green](1010)(0010)
\Edge[color=green](0001)(1001)
\Edge[color=green](0101)(1101)
\Edge[color=green](1111)(0111)
\Edge[color=green](1011)(0011)

\Edge[color=black](0000)(0001)
\Edge[color=black](0010)(0011)
\Edge[color=black](1000)(1001)
\Edge[color=black](1010)(1011)
\Edge[color=black](0111)(0110)
\Edge[color=black](0101)(0100)
\Edge[color=black](1111)(1110)
\Edge[color=black](1101)(1100)

\end{tikzpicture}
\end{center}

To construct $A_1 \times A_2/K$, recall that $K$ is generated by $\begin{bmatrix} 1 & 1 & 1 & 1\end{bmatrix}$, which means that each vertex is identified with the vertex that is obtained by following all four colors once. We should now get a graph with $16 / 2 = 8$ vertices. If we put an admissible dashing on it, we obtain a complete $2$-d Adinkra. One such choice recovers our Adinkra from Figure~\ref{fig:2d-example}:
\begin{center}
\begin{tikzpicture}[scale=0.05]
\SetVertexSimple[MinSize=5pt]
\SetUpEdge[labelstyle={draw}]
\Vertex[x=0,y=0]{A}
\Vertex[x=-10,y=0]{H}
\Vertex[x=-35,y=30]{C}
\Vertex[x=-25,y=30]{B}
\Vertex[x=25,y=30]{D}
\Vertex[x=15,y=30]{E}
\Vertex[x=0,y=60]{G}
\Vertex[x=-10,y=60]{F}
\Edge[color=red](A)(C)
\Edge[color=red](B)(H)
\Edge[color=red](G)(E)
\Edge[color=red](F)(D)
\Edge[color=green](A)(D)
\Edge[color=green, style=dashed](E)(H)
\Edge[color=green](G)(B)
\Edge[color=green, style=dashed](F)(C)
\Edge[color=blue, style=dashed](C)(H)
\Edge[color=blue](B)(A)
\Edge[color=blue, style=dashed](G)(D)
\Edge[color=blue](F)(E)
\Edge[color=black, style=dashed](D)(H)
\Edge[color=black, style=dashed](A)(E)
\Edge[color=black](G)(C)
\Edge[color=black](B)(F)
\end{tikzpicture}
\end{center}
\end{ex}

\section{Conclusion and Future Work}
\label{sec:conclusion}
In this work, we have continued in the vein of \cite{hubsch:weaving} to study $2$-d Adinkras and provided stuctural results to study them combinatorially.  Describing already-known worldsheet supermultiplets in these terms could lead to new insights about these supermultiplets, and lead to the discovery of new worldsheet supermultiplets.

One of the motivations for \cite{d2l:first} was the idea of studying $4$-dimensional or higher-dimensional supermultiplets dimensionally reduced to $1$ dimension.  Likewise, the present work allows us to study the dimensional reduction to $2$ dimensions, which might carry important information about the original supermultiplet.

As in \cite{hubsch:continuum}, we could also consider non-adinkraic worldsheet supermultiplets.  It would be interesting to see if, as in the case of one dimension, there is a continuum of worldsheet supermultiplets.  Another direction to extend these results is local supersymmetry, as in the case of supergravity or superconformal theories, which could be of importance to superstrings.

Of course, one obvious step is to go to three or four dimensions.  It is expected that $SO(1,d-1)$ representations will play a role, and this work may begin to intersect \cite{faux:dimensional_enhancement,faux:spin_holography}.\footnote{Higher dimensions is also a natural place to consider gauge fields, which have not yet played a role in this discussion.  In one dimension gauge fields can be gauged away to zero.  In two dimensions this is not the case, but the corresponding field strengths automatically satisfy Bianchi identities, and so the Adinkra formalism works well in this case.  But in higher dimensions, gauge fields will be more difficult to avoid.} Beyond this, it is hoped that this work will generally help develop our knowledge of supersymmetry in two dimensions.

Many mathematicians may appreciate Adinkras simply as nice combinatorial objects with lots of structure, with surprising links to coding theory, switching graphs, and graph coloring. This view presents additional questions, less relevant to the physics but still mathematically interesting, in the spirit of Theorem~\ref{thm:esdeclassify}:
\begin{itemize} 
\item For example, it would be good to know how many ESDE's are ``compatible'' with a $1$-d Adinkra $A$ (i.e. there is some $2$-d Adinkra $A'$ with the same underlying graph and rankings as $A$, with the required ESDE as its code), or with the family of $1$-d Adinkras with the same underlying graph. 
\item  Enumerating all $2$-d Adinkras is also a natural goal, though fairly ambitious\footnote{We still do not know how to count $1$-d Adinkras with $I^n$ as underlying graph beyond small $n$. For $I^n/C$ with nontrivial $C$, we have almost no data! The work in \cite{zhang:adinkras} basically settles dashings completely and gives some structural results on rankings, but counting rankings completely remains a very difficult problem, related to the chromatic polynomial for some families of $C$.}, but counting all $2$-d Adinkras under some natural constraints may be fruitful. 
\item How often does the main theorem require no additional vertex switching (i.e. $F$ is the identity)? In general, when given a dashing, we can ask related questions about the minimum number of vertex switches needed to produce a dashing with a well-defined quotient.
\end{itemize}

\section*{Acknowledgments}
The authors wish to thank Charles Doran, Sylvester Gates, and Tristan H\"ubsch for helpful conversation.

\appendix

\section{Relation to H\"ubsch's Original Language}
\label{app:repn}
The statement of Theorem~\ref{thm:quotient} is a bit different from the statement of the conjecture in \cite{hubsch:weaving}.  There, the language was partly in terms of representations of the $2$-d SUSY algebra instead of graphs.
\begin{construction}[Construction 2.1 (off-shell)] Let $R_+$ and $R_-$ denote off-shell representations of two copies of the $1$-d SUSY algebra with $p$ and $q$ colors, respectively, and let $Z$ be a symmetry of $R_+ \otimes R_-$, as a representation. The $Z$-quotient of the tensor product $(R_+ \otimes R_-)/Z$ is then an off-shell representation of $2$-d SUSY algebra with $(p,q)$ colors.
\end{construction}
The conjecture in \cite{hubsch:weaving} then says that every Adinkraic representation of $2$-d SUSY with $(p,q)$ colors is obtained by this construction.

The relation between this conjecture and Theorem~\ref{thm:quotient} will be apparent once we establish the following relationships.  None of these facts are new to this paper, but this information is collected here for the convenience of the reader.
\begin{itemize}
\item The relationship between off-shell representations of SUSY and Adinkras: This is the central idea behind the original paper on Adinkras\cite{d2l:first}, and so we do not go into detail here.  The idea is that each vertex $v$ of the Adinkra corresponds to a field $f_v$ in the SUSY representation, and $Q_i$ acts on fields by the edge of color $i$, with possible derivatives depending on the grading or bigrading (as the case may be), and with an extra minus sign if the corresponding edge is dashed.
\item The relationship between $A_1\times A_2$ and $R_+\otimes R_-$: The definition of $R_+\otimes R_-$ as a representation of $2$-d SUSY that
\[Q_i(f_v\otimes f_w)=Q_i(f_v)\otimes f_w \]
if $i\le p$ and
\[Q_i(f_v\otimes f_w)=(-1)^{|h(v)|}f_v\otimes Q_i(f_w)\]
if $i >p$.  This is the standard way in which tensor products are defined in $\ZZ_2$-graded algebras.\cite{bott_tu,freed}  Then Construction~\ref{const:product} mimics this definition on the level of Adinkras.
\item A vertex switching at $v$ corresponds to replacing $f_v$ with $-f_v$.  Then all equations involving $Q_i(f_v)$ or $f_v$ will get an extra minus sign.  This reverses all dashings on edges connected to $v$.
\item A quotient defined in Theorem~\ref{thm:quotient} is a symmetry $Z$ of the representation.
\end{itemize}

\bibliographystyle{abbrv}
\bibliography{Adinkras}

\end{document}